\theoremstyle{plain}
\newtheorem{lemma}{Lemma}[section]
\newtheorem{algorithm}[lemma]{Algorithm}
\theoremstyle{definition}
\newtheorem{remark}[lemma]{Remark}
\newtheorem{example}[lemma]{Example}
\newtheorem{definition}[lemma]{Definition}
\newcounter{example}
\newcommand{\E}{{\mathbb E}}
\newcommand{\N}{{\mathbb N}}
\renewcommand{\P}{{\mathbb P}}
\newcommand{\R}{{\mathbb R}}
\newcommand{\Z}{{\mathbb Z}}
\newcommand{\cF}{{\cal F}}
\newcommand{\diag}{\mathrm{diag}}
\newcommand{\icomp}{\mathrm i}
\newcommand{\notion}[1]{{\em #1}\index{#1}}
\newcommand{\exclude}[1]{}
\title{A short introduction to quasi-Monte Carlo option pricing}
\author{Gunther Leobacher\thanks{The author is supported by the Austrian 
Science Fund (FWF) project F5508-N26, which is part of the Special
Research Programme ``Quasi-Monte Carlo Methods: Theory and Applications''}}
\date{2014}
\begin{document}

\maketitle 

\begin{abstract}
One of the main practical applications of quasi-Monte Carlo (QMC) methods
is the valuation of financial derivatives. We aim to give a short introduction
into option pricing and show how it is facilitated using QMC. 
We give some practical examples for illustration.
\end{abstract}


\section{Overview}

Financial mathematics, and in particular option
pricing, has become one of the main application of quasi-Monte Carlo (QMC)
methods. By QMC we mean the numerical approximation of high-dimensional
integrals over the unit cube
\[
I=\int_{[0,1]^d}f(x) dx
\]
by deterministic equal weight integration rules, that is
\[
I\approx\frac{1}{N}\sum_{k=0}^{N-1}f(x_k)\,,
\]
for a suitably chosen point set $x_0,\ldots,x_{N-1}\in [0,1]^d$.
\exclude{Classically, 
$D=[0,1)^d$ and $\lambda$ is Lebesgue measure. For financial applications,
in particular for models relying on Gaussian random variables, it is often more
convenient to consider the setup $D=\R^d$, $\mu(A)=\int_A \phi(x)dx$, where
$\phi(x)=\exp(-\frac{x^\top x}{2})\frac{1}{\sqrt{2\pi}^d}$.}\\

In Section \ref{sec:finance} we give a very brief introduction into
the theory of option pricing. The main intention is to explain why 
an option price can be written (approximately!) as a high dimensional integral.
We present a couple of examples which are frequently used by researchers as
benchmarks for their pricing methods.\\

In Section \ref{sec:simulation} we first discuss some generalities of
simulation, like the generation of non-uniform random variables. We 
give some arguments why acceptance-rejection algorithms usually do not 
work so well with QMC. We give the basic properties of Brownian motion
and of L\'evy processes and we show how approximate paths can be 
generated from uniform or normal input variables. A special emphasis
is on orthogonal transforms for path generation. We mention the important
topic of multilevel Monte Carlo and we conclude with some concrete 
examples from option pricing.\\

This article does not try to be a comprehensive survey. There are many problems
and solutions that do not find any mention here but which are no less
important.  Just to mention one topic: for barrier options the 
discretization bias, when using the maximum of a discrete Brownian path as an
approximation to the continuous time path, is very big and
thus leads to impractically high dimensions. Therefore one has to find
ways to sample from the maximum of the path between discretization nodes
or similar, thus using
more involved probability theory than is required to understand the basic
methods presented here.

This article is also not comprehensive in that it neglects an important point:
Why do these methods work for financial problems? Most of the theory of QMC
does not apply to the kinds of functions appearing in option pricing. These
function are usually well behaved in that they are piecewise log-linear, but
they are very high dimensional, they are in general not bounded or of 
bounded variation, nor do they lie in any of
the many weighted Korobov or Sobolev spaces for which integration has been
proven to be tractable. 
Nevertheless the methods described in this articles are widely used in practice
and they do seem to work quite well. To fully explain why they give
these good results would be a great achievement and is subject to active
research.

\section{Foundations of Financial Mathematics}\label{sec:finance}

\subsection{Bonds, stocks and derivatives}

Since financial mathematics is (mainly) about the valuation of financial
instruments, we now
give a short overview of the most basic of these. 

\begin{itemize}
\item A {\em bond} is a financial instrument that pays its owner a fixed 
amount of money at a pre-specified date in the future. The writer of the bond
is usually a big company or a government. The owner effectively becomes a
creditor to the writer. If the quality of the debtor is high, the bond
can be modeled as a deterministic payment. The bond usually sells at a lower
price than its payoff and thus pays {\em interest}.
\item A {\em share} is a financial instrument that warrants its holder
ownership of a fraction of a corporation. In particular, the shareholder
participates in the business revenue due to dividend payments.

However, dividend payments are not the only possible source of income
through a share. At least equally important is the gain due to a price change.
On the downside the price change may result in a loss.
If the shares of a company are traded at a stock exchange then buying and 
selling them is particularly simple and high frequency traders
may buy and sell large contingents of shares several times per second.

The value of a share depends on a host of parameters, such as the preferences 
of the individual agent, the assets of the company, the future 
dividend payments and the future interest rates. 

The so-called {\em efficient market hypothesis} assumes that the value of 
the share at a given time is just the market price at that very time.
Under this hypothesis it does not make sense to compute the objective 
value of a share in a mathematical model and compare it to the market price.
The only way that a computed value of a share can differ from its market 
price is that our preferences and/or expectations differ from that of the 
majority of the market, thus giving a subjective price.

\item  A {\em contingent claim} is a financial instrument whose value at a
future date can be completely described in terms of 
the prices of other financial instruments, its underlyings. A typical example
is an option on a share. A European call option on a share with maturity
$T$ is a contract which gives its holder the right (but not the obligation) 
to buy one share from the option writer at some fixed time $T$ in the future
at the previously 
agreed price $K$. Denote the price of the share at time $T$ by $S_T$.

Since the option holder may sell the share instantly at
the stock exchange, the value of the option at time $T$ 
is $S_T-K$ if $S_T>K$, and $0$ if $S_T\le K$. 

The left hand side of figure \ref{fig:payoff-vanilla} shows the payoff of a 
European call option dependent on the price of the share at maturity.
An important feature is the kink at the strike price $K$. \exclude{If the 
price depends on several input variables, this kink will have the effect
that the concatenated function has unbounded variation.}

On the right hand side of figure \ref{fig:payoff-vanilla} we plot the 
payoff of a European put option. This is an option which gives its holder
the right (but not the obligation) to sell one share 
to the option writer at some fixed time $T$ in the future
at the previously agreed price $K$. If the share price satisfies $S_T\ge K$ at
time $T$, then the option is worthless. But if $S_T<K$, then the option holder
may buy the
share at the stock exchange at price $S_T$ and sell it immediately to the
writer at price $K$, thus realizing a gain of $K-S_T$.

\begin{figure}[h]
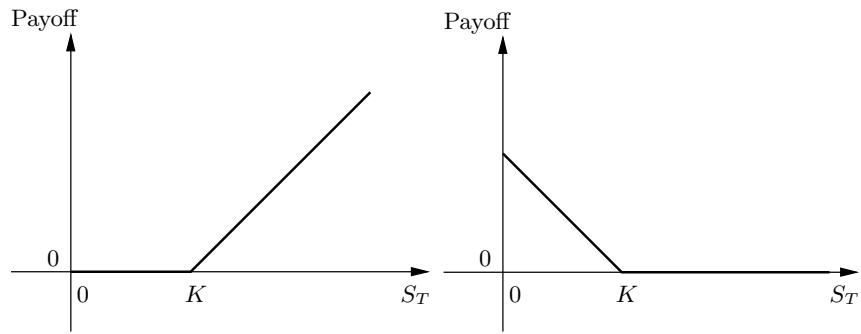

\begin{center}
\input{payoff_call.tex}
\input{payoff_put.tex}
\end{center}
\caption{Payoff of a European call and put option}\label{fig:payoff-vanilla}
\end{figure}

In Figure \ref{fig:payoff-digital} we show the payoff of another contingent
claim, a so-called digital asset-or-nothing call option. This option pays a
fixed amount of cash
at expiry if at that time the price $S_T$ of the underlying  
is above the strike $K$. We also plot the payoff of the corresponding put option. The digital option serves as an example of a contingent claim with 
discontinuous payoff.
\\
\begin{figure}[h]
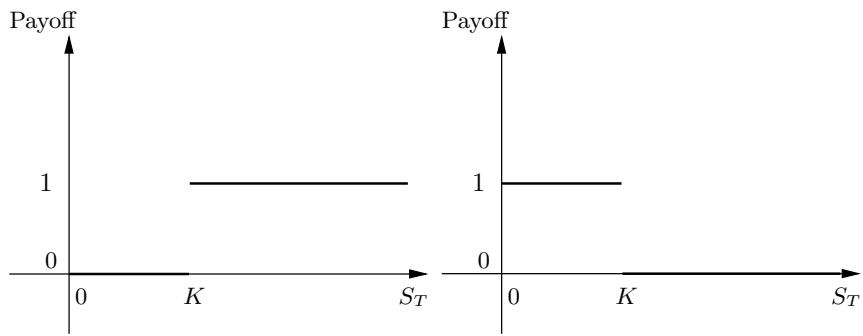

\begin{center}
\input{payoff_digcall.tex}
\input{payoff_digput.tex}
\end{center}
\caption{Payoff of a digital cash-or-nothing call and put option}\label{fig:payoff-digital}
\end{figure}

Since the value of an option 
is strongly tied to that of the underlying and in simple models
is completely determined by the parameters of the model,
an objective value of the option can be computed in these models using
arbitrage arguments.
\end{itemize}

\subsection{Arbitrage and the No-Arbitrage Principle}

Suppose you are given an option on a stock. You know 
the specifications of the option and therefore you know 
the uncertain payoff at its maturity $T$ given the uncertain
value of the stock at that particular time. One is tempted to use
statistical methods to estimate from historical stock prices 
the distribution of the stock price at time $T$ and {\em a fortiori}
estimate the value of the option as its expected value under the
estimated distribution. We will show in this section that this 
reasonable program will in general yield a price that is unreasonable
from a more basic perspective in that it allows for risk-less profit.

While general arbitrage theory is well beyond the scope of this article,
the underlying principle can be illustrated rather quickly. For the general
theory see \cite{ds06}.

Assume the following simple market model where we have only two times, $0$ and
$1$, and three instruments, a bond, a share, and a European call option with
strike $K=1$ and maturity $T=1$.  Let
$B=(B_t)_{t\in\{0,1\}},S=(S_t)_{t\in\{0,1\}},C=(C_t)_{t\in\{0,1\}}$ denote the
price processes of the bond, share, option respectively
and assume the following parameters: $B_0>0$, $B_1=B_0(1+r)$, $r\ge 0$,
$S_0>0$, $S_1=S_0 u$ with probability $p$ and $S_1=S_0 d$ with probability 
$1-p$, where $0<d<1+r<u$. The value of the option at time 1 is 
$\max(S_1-K,0)$, thus
$C_1=\max(S_0u-K,0)$ with probability $p$ and $C_1=\max(S_0d-K,0)$ with 
probability $1-p$.
Suppose we know, for example from statistical studies, the value of  $p$. 

Then one would be tempted to conclude that the price of the option at 
time $0$ is
\[
\hat C_0 = \frac{B_0}{B_1}\E(\max(S_1-K,0))=
\frac{1}{1+r}(p \max(S_0 u-K,0)+(1-p)\max(S_0 d-K,0))\,.
\]
However, this formula cannot be true in general. Suppose $r=0$, $u=2$,
$d=\frac{1}{2}$, $S_0=K=1$ and $p=\frac{1}{2}$, for which the
above formula gives $\hat C_0=\frac{1}{2}$. 

Then we could do the following: at time 0,
write  4 options and sell them for 2 Euros, borrow 1 additional Euro 
to buy three shares. Note that the net
investment is zero.

Now wait until time 1. 
If the share price goes up, the shares are worth 6. We sell them 
to get 6 Euros in Cash. Since the share price $S_T$ (which is $2$) 
is bigger than the strike $K$ (which is $1$), the
options will be executed, costing us 4 Euros and we have to pay 1 Euro back.
Thus our strategy leaves us with a net profit of 1 Euro. 

If the share price goes down, the options become worthless and we sell
the shares, giving us $\frac{3}{2}$ and thereby, after paying 
1 Euro back, leaving us with a profit of $\frac{1}{2}$ Euro.

Thus, whatever happens, we are left with a positive profit without
taking any risk. Such a situation is called an {\em arbitrage opportunity} and
for obvious reasons it is usually assumed that such opportunities
 do not exist in a viable market.

It can easily be shown that there is only one price in this model that does
not allow for arbitrage, namely
\[
C_0=\frac{1}{1+r}\big(p^* \max(S_0 u-K,0)+(1-p^*)\max(S_0 d-K,0)\big)\,,
\]
where $p^*=\frac{1+r-d}{u-d}$. The distinctive feature of $p^*$ is that
$\frac{1}{1+r}(p^*S_0u+(1-p^*)S_0d)=S_0$, that is,  
the stock price process is a {\em martingale}\footnote{We do not give a precise
definition for this. Intuitively, a martingale is a process $X$ such that
the conditional expectation of $X_{t+s}$ given $X_t$ is $X_t$. Thus a 
martingale is a model for the gain process of a player in a fair game.} 
with respect to this new probability.

\subsection{The Black-Scholes model}

The simple model in the preceding section can be extended to 
an $n$-step setup. One is tempted to let $n$ go to infinity to
obtain a continuous-time model. Indeed, this can be done in 
rigorous fashion so that we arrive at a model of
the form 
\begin{equation}\label{eq:bs-model}
\begin{aligned}
B_t&=B_0 \exp(rt)\\
S_t&=S_0 \exp(\mu t + \sigma W_t)\,, 
\end{aligned}
\end{equation}
where $W$ is a Brownian motion, that is, a continuous-time stochastic
process with specific properties.  The exact mathematical definition
of Brownian motion will be given in Section \ref{sec:brownian-motion}.

As in the one-step model there exists a probability measure $\P^*$,
equivalent to the original measure $\P$, such that 
$t\mapsto B_t^{-1}S_t$ becomes a martingale. Under this new probability
measure 
\[
S_t=S_0 \exp\left((r-\frac{\sigma^2}{2} )t + \sigma W^*_t\right)\,,
\]
where $W^*_t=W_t+\frac{r-\mu-\frac{\sigma^2}{2} }{\sigma}$ is a Brownian
motion under $\P^*$.

This new probability measure is now used to price derivatives in this
model: if $C$ is some European contingent claim, that is, a derivative
whose payoff $C_T$ at time $T$ is a function of 
$S_t, 0\le t\le T$, then its arbitrage-free
price at time 0 is given by 
\begin{equation}\label{eq:bs-formula1}
C_0=\E^*(B_T^{-1}C_T)\,,
\end{equation}
where $\E^*$ denotes expectation with respect to $\P^*$. When $C_T$ depends
only on finitely many $S_{t_j},\, j=1,\ldots, m$ then the expectation
in \eqref{eq:bs-formula1} can be written as an $m$-dimensional integral,
which is where QMC enters the game. The details of this will be given 
in Section \ref{sc:generation-brownian}.

In our continuous time model we assume that the option can be traded
at any time prior to its maturity $T$.  
For this, the time $t$ analog of \eqref{eq:bs-formula1} is 
\begin{equation}\label{eq:bs-formula2}
B_t^{-1}C_t=\E^*(B_T^{-1}C_T)\,,
\end{equation}
or $C_t=B_t \E^*(B_T^{-1}C_T)$.

Because of its simplicity, the Black-Scholes model does not provide us with
many interesting examples for simulation. One step towards demanding 
problems is to look at the $m$-dimensional Black-Scholes model.

Consider $m$ shares $S^1,\ldots,S^m$ whose price processes are given by
\[
S^j_t=S^j_0 \exp\left(\mu_j t +\sum_{l=1}^k \sigma_{jl}W^l_t\right)
\]
where $W^1,\ldots,W^k$ are $k$ independent Brownian motions and 
$\sigma=(\sigma_{jl})_{jl}$ is a $m\times k$ matrix.
In this model neither the existence nor the uniqueness of a probability measure
that makes each process $(e^{-rt}S^j_t)_{0\le t\le T}$ a martingale is granted.
In fact, every solution $\nu\in\R^m$ of the linear system
\[
\sigma \nu=r \mathbf{1} -\mu -\frac{1}{2}\diag(\sigma\sigma^\top)
\]
gives rise to such a measure ($\mathbf{1}$ is the vector in $\R^m$ with all
entries equal to $1$).

If such a solution exists, the price processes take on the form
\[
S^j_t=S^j_0 \exp\left((r-\frac{1}{2}(\sigma\sigma^\top)_{jj} )t +\sum_{l=1}^k \sigma_{jl}\tilde W^l_t\right)\,,
\]
where 
$\tilde W^1,\ldots,\tilde W^k$ are $k$ independent Brownian motions under 
the new measure. 

\begin{remark}
The new probability measure is equivalent to the original one only if 
we restrict to finite time intervals $[0,T]$, $T<\infty$.
\end{remark}

These models are interesting from the point of view of (optimal) portfolio 
selection, but they also provide us with practical high-dimensional 
integration problems through derivative pricing. Important examples are 
{\em basket options}\index{basket options}, which are derivatives whose
payoff depends on the price process of several shares. One example of a payoff 
of a basket option on shares with prices $S^1,\ldots,S^d$ is 
\[
C_T=\max\big(w^1S^1_T+\ldots+w^rd1S^d_T-K,0\big)\,,
\]
for some weights $w^1,\ldots,w^d$.

\subsection{SDE models}

In many models from financial mathematics, the share price process
is not given explicitly but is described via a {\em stochastic differential equation}\index{stochastic differential equation}, in short {\em SDE}\index{SDE}.

For example, the SDE corresponding to the basic Black-Scholes model is
\[\begin{aligned}
dS_t&=\hat\mu S_t dt +\sigma S_t dW_t\\
S_0&=s_0\,.
\end{aligned}
\]
The a.s. unique solution\footnote{This is a consequence of the famous It\^o 
formula from stochastic analysis. In short, the It\^o formula states that for
a function $f$ which is $C^1$ in the first variable and $C^2$ in the second variable, we have 
\[
d f(t,W_t)=\frac{\partial f}{\partial t}(t,W_t)dt 
+\frac{\partial f}{\partial W}(t,W_t)dW_t
+\frac{1}{2}\frac{\partial^2 f}{\partial W^2}(t,W_t)dt\,.
\]} to this SDE with initial value $S_0$ is 
\[
S_t=s_0 \exp(\hat\mu t +\sigma W_t -\frac{\sigma^2}{2}t)\,,
\]
such that for $\hat \mu=\mu +\frac{\sigma^2}{2}$ we recover the
price process from \eqref{eq:bs-model}.

More generally, a model could be defined by an $m+1$-dimensional 
SDE 
\begin{equation}\label{eq:gen-sde}
\begin{aligned}
dS_t&=\mu(t,S_t)dt+\sigma(t,S_t)dW_t\\
S_0&=s_0\,.
\end{aligned}
\end{equation}
where $S=(S^0,\ldots,S^m)$ is an $m+1$-dimensional stochastic process 
and $s=(s^0,\ldots,s^m)\in \R^{m+1}$. It is assumed that one coordinate is 
the price of
an asset that can function as a numeraire in that it is never 0. 
In this general model not all the components need
to correspond to share prices or indeed to prices at all. 
Consider, for example the so-called Heston model (already under an equivalent 
martingale measure):
\[
\begin{aligned}
dB_t&=r B_t dt\\
dS_t&=r S_t dt+\sqrt{V_t} S_t (\rho dW^1_t +\sqrt{1-\rho^2}dW^2_t) \\
dV_t&=\kappa (\theta-V_t) dt+\xi \sqrt{V_t}\, dW^1_t \\
(B_0,S_0,V_0)&=(b_0,s_0,v_0)\,.
\end{aligned}
\]  
Here, $r, \kappa,\theta,\xi$ are positive constants, $\mu$ is a real constant,
and $-1<\rho<1$ is a correlation coefficient. 

The third component of our process, $V$, is the so-called volatility of the
share price and is not a tradable asset.

It is worth mentioning that, despite there not being an explicit solution 
known for the SDE, there is a semi-exact formula for the price of 
a European call option in the Heston model using Laplace inversion. \\ 
 
We do not concern ourselves with the theory of SDEs since this is clearly 
beyond the scope of our article. 
From the point of view
of (quasi-)Monte Carlo it is mostly of interest to know that under suitable regularity requirements on the
coefficients of the SDE there exists a unique solution and that under even
stronger conditions this solution can be approximated. 

Let $S_T$ be the solution to the SDE at time $T$ and let $\hat{S}_N$ be
some approximation to $S_T$ computed on the time grid $0=t_0<t_1<\ldots<t_N=T$
with fineness $\delta=\max_{1\le k\le N}(t_k-t_{k-1})$.
We say that $\hat S_N$ converges to $S_T$ in the {\em strong sense} with 
order $\gamma$, if $\E(|S_T-\hat S_N|)=O(\delta^\gamma)$.

Sometimes it is enough to 
compute some characteristics
of the solution like $\E(f(S_T))$ for a function $f$ belonging to some class
$C$. This question is linked to the concept of {\em weak convergence} of 
numerical schemes. See, for example, \cite[Chapter 9.7]{kloeden}. The benefit
is that the weak order of an approximation scheme is usually higher than the
strong order of the same scheme.

The most straightforward solution method is the Euler-Maruyama method:
given \eqref{eq:gen-sde} we compute an approximate solution $\hat S$ on the 
time nodes
$0,h,\ldots, n h=T$ via
\begin{align}
\nonumber
\widehat S_0&= S_0\\
\label{eq:euler}
\widehat S_{k+1}&=\widehat S_k+\mu(kh,\widehat S_k)h+\sigma(kh,\widehat S_k)\Delta W_{k+1}.
\end{align}
It follows from the definition of Brownian motion that $W_{(k+1)h}-W_{kh}$ is a
normal random vector with expectation 0 and covariance matrix
$\sqrt{h}\mathbf{1}_{\R^{m+1}}$. Frequently, \eqref{eq:euler}
is therefore stated in the form
\begin{equation}
\hat S_{k+1}=\hat S_k+\mu(kh,\hat S_k)h+\sigma(kh,\hat S_k)\sqrt{h}Z_{k+1}\,,
\end{equation}
where $Z_1,Z_2,\ldots$ is a sequence of standard normal vectors.
However, we will prefer the original form when using quasi-Monte Carlo.

Under suitable regularity conditions (Lipschitz in second variable, 
sublinear growth with first variable, sufficient smoothness) 
on the coefficient functions
$\mu,\sigma$ of the SDE,  the Euler Maruyama scheme converges in the strong sense with order $\frac{1}{2}$ and in the weak sense  with order $1$, such that,
for sufficiently regular $f$, $\E(f(\hat S_{nh}))$ is a decent approximation to 
$\E(f(S_T))$, for sufficiently small $h$. Discussion of the regularity 
conditions and proofs can be found in \cite{kloeden}.

We report two other schemes for solving autonomous 
SDEs numerically, which  under 
appropriate conditions on the coefficients converge in the strong sense with 
order 1. The first is the Milstein scheme,
\begin{equation}\label{eq:milstein}
\hat S_{k+1}=\hat S_k+\mu(\hat S_k)h+\sigma(\hat S_k)\Delta W_{k+1}
+\frac{1}{2}\sigma(\hat S_k)\sigma'(\hat S_k)(\Delta W_{k+1}^2-h)\,,
\end{equation}
where $\Delta W_{k+1}:=W_{(k+1)h}-W_{kh}$ and where $\sigma'$ is the derivative
of $\sigma$. 
The second is an example for a Runge-Kutta scheme, with the advantage of
not requiring a derivative:
\begin{equation}\label{eq:rungekutta}
\hat S_{k+1}=\hat S_k+\mu(\hat S_k)h+\sigma(\hat S_k)\Delta W_{k+1}
+\frac{1}{2}(\sigma(Y_k)-\sigma(\hat S_k))(\Delta W_{k+1}^2-h)h^{-\frac{1}{2}}\,,
\end{equation}
where the supporting value $Y_k$ is given by 
$Y_k=\hat S_k+\sigma(\hat S_k)h^{\frac{1}{2}}$.

A problem that can occur in practice is that the simulated path can leave the 
domain of definition while the exact solution does not. For example, 
the approximate stock price and/or the volatility process may become
negative. See again \cite{kloeden} and also \cite{andersen} for a thorough
treatment of Monte Carlo simulation of the Heston model.

\subsection{L\'evy models}

L\'evy processes are generalizations of Brownian motion.  
The mathematical definition will be given in Section \ref{sec:levy}.

These processes are interesting 
for financial modeling since they allow for jumps. 
In analogy to the Gaussian models, i.e. models built on Brownian motion,
they come in two flavors. There are explicit models where the stock price is 
exponential L\'evy motion:
\[
S_t=\exp(L_t)\,,
\]
where $L$ is a L\'evy process with $\E(\exp(L_t))<\infty$r. Alternatively,
the stock price might 
again be given by an SDE, i.e.
\[
dS_t=f(t,S_{t-})dL_t\,.
\]
If it is possible to sample from the increments of $L$, then the Euler-Maruyama
scheme still allows us to simulate a discrete approximation to the solution $S$,
\[
\hat S_{k+1}=\hat S_k+f(kh,\hat S_k)(L_{(k+1)h}-L_{kh})\,.
\] 
From the point of view of option pricing it is important that
the market is arbitrage-free. That is, we need to find an
equivalent probability measure, such that discounted prices of
tradable assets are martingales. This is usually achieved with the
so-called Esscher transform, a change of measure under which the 
Process $L$ is again a L\'evy process, see for example \cite[Chaper 9.5]{cont}.

\subsection{Examples}

We conclude this very short introduction to financial mathematics  with
some examples.  

A European Call option on a share with price process $(S_t)_{t\ge 0}$ 
and with strike $K$ and maturity $T$ has payoff 
$C_T=\max(S_T-K,0)$. 
The pricing equation \eqref{eq:bs-formula1} therefore gives the
option price in the Black-Scholes model at time $t$ as
\[
C_0=e^{-rT}\E^*( \max(S_T-K,0))\,.
\]
Since
\[ 
S_T=S_0 \exp\left((r-\frac{\sigma^2}{2} )T + \sigma W^*_T\right)\,,
\]
and since $(r-\frac{\sigma^2}{2} )T+\sigma W^*_T$ is a $N(0,\sigma^2 T)$ random variable, we get
\begin{align*}
C_0&=e^{-rT}\int_{-\infty}^\infty \max(S_0e^x-K,0))e^{-\frac{(x-(r-\frac{\sigma^2}{2} )T)^2}{2\sigma^2 T}}\frac{1}{\sqrt{2\pi \sigma^2 T}}dx\\
&=e^{-rT}\int_{\log(\frac{K}{S_0})}^\infty (S_0e^x-K)e^{-\frac{(x-(r-\frac{\sigma^2}{2} )T)^2}{2\sigma^2 T}}\frac{1}{\sqrt{2\pi \sigma^2 T}}dx\,.
\end{align*}
The integral can in fact be computed and its value is given by the famous
Black-Scholes option pricing formula
\begin{equation}\label{eq:the-bs-formula1}
C_0=S_0\Phi(d_1)-e^{-rT}K\Phi(d_2)\,,
\end{equation}
where $\Phi(x)=\int_{-\infty}^x e^{-\frac{x^2}{2}}\frac{1}{\sqrt{2\pi}}dx$ and
\begin{equation}\label{eq:the-bs-formula2}
d_1=\frac{\log{\frac{S_0}{K}}+(r+\frac{\sigma^2}{2})T}{\sigma \sqrt{T}}
\quad\text{and}\quad d_2=\frac{\log{\frac{S_0}{K}}+(r-\frac{\sigma^2}{2})T}{\sigma \sqrt{T}}\,.
\end{equation}
So in this case we get a closed-form formula and there is no need to apply
simulation techniques. The price $C_t$ for $0\le t\le T$ can be obtained
from equations \eqref{eq:the-bs-formula1} and \eqref{eq:the-bs-formula2}
simply by substituting $(T-t)$ for $T$.  
Another class of examples for which there often exist closed formulas are
barrier- and lookback options, where the payoff depends on the maximum
or minimum of the price over a given interval.

We move on to a somewhat harder example: the payoff of an {\em Asian option} 
\index{Asian option} written on a share with price process $(S_t)_{t\in [0,T]}$ depends on
the average price over some interval $[T_0,T]$, $T_0< T$, where $T$ 
is the expiry date of the option.  The payoff of a {\em fixed strike} 
Asian call option is given by 
\[
C^{\mathrm{fix}}_T=\max \left(\frac{1}{T-T_0}\int_{T_0}^{T} S_\tau d\tau-K,0\right)\,, 
\]
The payoff of a {\em floating strike} 
Asian call option is given by 
\[
C^\mathrm{flt}_T=\max \left(\frac{1}{T-T_0}\int_{T_0}^{T} S_\tau d\tau-S_T,0\right)\,.
\]
Up to now, nobody has found an explicit formula for either Asian option, but 
there are rather efficient methods using PDEs to compute the value, see 
for example \cite{rogers}. Nevertheless, this example is a nice benchmark for 
simulation methods.\\  

For basket options on several shares the PDE method becomes intractable. Here,
we really have to use simulation. A possible example
payoff is 
\[
\max\left(\frac{1}{m}(S^{1}_T+\ldots+S^{m}_T)-K,0\right) \,,
\] but
more complicated dependencies on the price processes can be encountered in 
practice. In particular, the payoff may depend on the time-averages of the price
processes. Then the option  also has some Asian characteristics.

\section[MC and QMC simulation]{Monte Carlo and quasi-Monte Carlo simulation}
\label{sec:simulation}

\subsection{Non-uniform random number generation}

Most random variables encountered in practical models are not uniformly 
distributed.
We are therefore interested in methods for generating pseudo- or quasi-random 
numbers with a given distribution from their uniform counterparts.

The most straightforward method is the so-called inversion method which 
will be presented in the first subsection.

We are also going to present the class of acceptance-rejection methods for
generating random numbers with a given distribution. We will also argue that
these methods, while usually being the most efficient for Monte Carlo, are not
suited for quasi-Monte Carlo.

\subsubsection{Inversion method}

The most straightforward method for constructing non-uniform
pseudo random numbers from uniform ones is the inversion method.

We introduce this method for a special case only\exclude{and defer the general method to the exercises}. Consider a real random variable $X$ 
with bijective cumulative distribution
function (CDF) $F$, i.e.
$F:\R\longrightarrow (0,1)$,
$F(x)=\P(X\le x)$ for all $x\in \R$ is such that there exists 
$G:(0,1)\longrightarrow\R$ with $G(F(x))=x$ for all $x\in \R$ and 
$F(G(u))=u$ for all $u\in (0,1)$.

Suppose now that the random variable $U$ is uniformly distributed on $(0,1)$
and define a real random variable $Y:=G(U)$. 
Then $Y$ has the same distribution as $X$. To see this, let
$y\in\R$. Then 
\begin{align*}
\P(Y\le y)&=\P(G(U)\le y)=\P\big(F(G(U))\le F(y)\big)
= \P(U\le F(y))=F(y)\,.
\end{align*}
So $F$ is also the distribution function of $Y$.

A sufficient condition for a cumulative distribution function 
to be invertible is that it has a positive probability density function (PDF)
on $\R$. 
\exclude{See Exercise \ref{ue:cdfinv}.}

\subsubsection{Acceptance-rejection method}

Inverting a CDF numerically can be computationally expensive. 
A very versatile and cheap alternative method for generating a random variable with
prescribed probability density function $f$ is the acceptance-rejection method.
For its implementation we need another distribution for which it is cheap
to sample from, e.g., via the inversion  method. Let $g$ be the probability 
density function of this distribution. Moreover, we need that, for some 
$c>0$, $f(x)\le c g(x)$ for all $x\in \R$.

\exclude{Let $Y_1,Y_2,\ldots$ be a sequence of independent 
random variables distributed according
to density $g$ and let $U_1,U_2,\ldots$ be a sequence of independent random 
variables which are uniformly distributed on $[0,1]$.

Let $N:=\min\left\{k:U_k\le  \frac{f(Y_k)}{cg(Y_k)}\right\}$ and let $X=Y_N$. 
\begin{align*}
\P\left(U_k\le  \frac{f(Y_k)}{cg(Y_k)}\right)
&=\int_{-\infty}^\infty \P\left(U_k\le  \frac{f(Y_k)}{cg(Y_k)}\big|Y_k=y\right)g(y)dy\\
&=\int_{-\infty}^\infty  \frac{f(y)}{cg(y)}g(y)dy= \frac{1}{c}\,.
\end{align*}
Therefore, $N$ has geometric distribution with parameter $1-\frac{1}{c}$,
$\P(N=k)=\frac{1}{c}(1-\frac{1}{c})^{k-1}$.
\begin{align*}
\P\left(Y_k\le x \text{ and } U_k\le  \frac{f(Y_k)}{cg(Y_k)}\right)
&=\int_{-\infty}^x \P\left(U_k\le  \frac{f(Y_k)}{cg(Y_k)}|Y_k=y\right)g(y)dy\\
&=\int_{-\infty}^x \frac{1}{c}f(y)dy= \frac{1}{c} \int_{-\infty}^x f(y)dy\,.
\end{align*}
Now
\begin{align*}
\P(X\le x)
&= \sum_{k=1}^\infty \P(X\le x | N=k)\P(N=k)\\
&= \sum_{k=1}^\infty \P(Y_k\le x | N=k)\P(N=k)\\
&= \int_{-\infty}^x f(y)dy \sum_{k=1}^\infty  (1-\frac{1}{c})^{k-1} \frac{1}{c}\\
&= \int_{-\infty}^x f(y)dy \,.
\end{align*}
}

The algorithm is as follows:
\begin{algorithm}
\begin{enumerate}
\item Generate a sample $Y$ from density $g$ and a uniform random variable $U$. 
\item If $U\le  \frac{f(Y)}{cg(Y)}$, set $X=Y$ else go back to step 1.
\end{enumerate}
\end{algorithm}

It is not hard to give a proof that the algorithm gives indeed a random
variable with the desired distribution,
and it follows from the proof that $c$ should be as small as possible
so that the algorithm stops after only few steps.

\subsubsection{Box-Muller method and Marsaglia-Bray algorithm}

Recall the definition of a normal (or Gaussian) random variable:
\index{Gaussian random variable}\index{normal random variable}\index{random variable!Gaussian}
\index{random variable!normal}

\begin{definition}
A random variable $X$ is {\em normally distributed} with mean $\mu$ and
variance $\sigma^2>0$ if it has probability density function
\[
f_X(x)=
\frac{1}{\sqrt{2 \pi \sigma^2}}
\exp\left(-\frac{(x-\mu)^2}{2 \sigma^2}\right)\,.
\]
More generally, a random vector $X=(X_1,\ldots,X_d)$ is said to be {\em normally
distributed} with mean $\mu\in \R^d$ and covariance matrix $\Sigma>0$ if
it has joint probability density function
\[
f_X(x)=
\frac{1 }{\sqrt{(2 \pi) ^d\det(\Sigma)}}
\exp\left(-\frac{(x-\mu)^\top \Sigma^{-1}(x-\mu)}{2}\right)\,.
\]
Here, $\Sigma>0$ means that $\Sigma$ has to be positive definite,
i.e. $x^\top \Sigma x>0$ for all $x\in \R^d\backslash\{0\}$.
\end{definition}

Consider a 2-dimensional standard normal vector $(X,Y)$. 
\begin{align*}
\P(\sqrt{X^2+Y^2}\le r)&=\int_{-r}^r\int_{-\sqrt{r^2-y^2}}^{\sqrt{r^2-y^2}}
\exp(-(x^2+y^2)/2) /\sqrt{2\pi} dx\,dy\\
&=\int_0^r \int_{0}^{2\pi} \rho \exp(-\rho^2/2) /(2\pi) d\varphi\,d\rho\\
&=1-\exp(-r^2/2)\,.
\end{align*}

It follows that the modulus of $(X,Y)$ has distribution function 
$F_R(r)=1-\exp(-r^2/2)$. But that means that we can generate
a random radius by inversion of $F_R$, 
$F_R^{-1}(u)=\sqrt{-2\log(1-u)}$

\begin{algorithm} \label{alg:boxmuller}
\begin{enumerate}
\item Generate two independent $U[0,1)$ random samples $U,V$;
\item let $R=\sqrt{-2\log(1-U)}$;
\item let $X=R \cos(2 \pi V)$ and $Y=R \sin(2\pi V)$.
\end{enumerate}
\end{algorithm}

\begin{remark}
In Algorithm \ref{alg:boxmuller} we could have let 
$R=\sqrt{-2\log(U)}$ as well. But many implementations of pseudo-random 
number generators
give, with very low but still positive probability 0 while never giving
1. So having $1-U$ as the argument of
the logarithm is slightly saver.
\exclude{If we are going to use quasi-Monte Carlo, then our caution is even more
justified since, for example, digital nets give zero sometimes. Also,
we cannot easily discard a point with a zero component since this 
destroys the structure of the point set.}
\end{remark}

There is a acceptance-rejection-type variant of the Box-Muller method which is 
known as Marsaglia-Bray algorithm:

\begin{algorithm}[Marsaglia-Bray]
\begin{enumerate}
\item Generate two independent $U[0,1)$ random samples $U,V$;
\item let $U_1=2 U-1$ and $V_1=2 V-1$;
\item if $U_1^2+V_1^2\ge 1$ reject $(U,V)$ and start from the beginning;
\item else let $S=U_1^2+V_1^2$; 
\item if $S=0$ set $(X,Y)=(0,0)$; 
\item else set $X=U_1 \sqrt{-2\log(S)/S}$ and $Y=V_1\sqrt{-2\log(S)/S}$\,.
\end{enumerate}
\end{algorithm}

We leave the proof that $(X,Y)$ are independent standard normal variables 
to the reader.

\subsubsection{Importance sampling}

For some densities it is very hard -- if not impossible -- to invert the 
CDF exactly, and frequently it is very expensive to do so numerically.

On the other hand, it is not always necessary to generate exactly from the
given distribution but rather one samples from a distribution that is close
(in some sense that remains to be made precise) to it and adjusts for the error
made. This method is called \notion{importance sampling} or, in the present 
context,
\notion{smooth rejection}.

We present the idea in a one-dimensional setup, the general case is 
straightforward. Consider a random variable $X$ with PDF $f_X$ and
suppose we want to compute $\E(h(X))$ for some function $h$.
Let $F_X$ denote the corresponding CDF, $F_X(x)=\int_{-\infty}^x f_X(\xi)d\xi$.
Normally, we would compute
\[
\E(h(X))\approx\frac{1}{N}\sum_{n=1}^N h(F_X^{-1}(U_n))
\]
using the inversion method, where $U_1,\ldots, U_N$ is a uniform 
pseudo-random sequence or a low-discrepancy sequence.

Suppose now that we do not know how to (cheaply) invert $F_X$. 

In addition, assume that there is another PDF $g$ for which $G$, 
$G(x)=\int_{-\infty}^x g(\xi)d\xi$ is easily inverted. Then
\begin{align*}
\E(h(X))
&=\int_{-\infty}^\infty h(x)f_X(x)dx\\
&=\int_{-\infty}^\infty h(x)\frac{f_X(x)}{g(x)}g(x)dx\\
&=\E\left(h(Y)\frac{f_X(Y)}{g(Y)}\right)\,,
\end{align*}
where $Y$ is a random variable with PDF $g$.
Now the last expected value can be computed by sampling from the density $h$
using the inversion method.
\[
\E\left(h(Y)\frac{f_X(Y)}{g(Y)}\right)
\approx\frac{1}{N}\sum_{n=1}^N g\left(H^{-1}(U_n)\right)\frac{f_X(H^{-1}(U_n))}{g(H^{-1}(U_n))}
\]

\begin{remark}
When using Monte Carlo, one may also sample from the density $h$ using the
rejection method. The goal of importance sampling is then to 
reduce the variance of the integrand to speed up convergence. See
for example \cite{glasserman}.
\end{remark}

\begin{remark}
Importance sampling is particularly useful for sampling from a random
vector whose components have a complicated correlation structure.
\end{remark}

\subsubsection{Why not to use rejection with quasi-Monte Carlo}

We already mentioned that using rejection algorithms with quasi-Monte Carlo
is not appropriate. This does not necessarily mean that doing so will lead
to wrong results. But the results will be more costly than with Monte Carlo
and less accurate than with QMC without rejection. 

But first consider Monte Carlo simulation. Usually we are given 
a pseudo random number generator that gives us a sequence $(U_n)_{n\ge 1}$
of numbers in $[0,1]$ which are -- ideally -- indistinguishable from 
a truly random sequence of independent random variables with uniform 
distribution on $[0,1)$. From the sequence $(U_n)_{n\ge 1}$ we now
compute a sequence $(X_n)_{n\ge 1}$ of independent random variables with 
given distributions, for example by using a rejection algorithm. 
To the degree that the original sequence obeys the laws of probability
the transformed sequence will do so as well. If on average a fraction 
$\beta$ close to 1 of the original sequence is rejected, that does not hurt much.

For quasi-Monte Carlo the situation is quite different.
If we have a low discrepancy sequence $(u_n)_{n\ge 1}$ in the $s$-dimensional 
unit cube and we apply a rejection algorithm to every component then we 
have to make a decision about what to do if one component is rejected. 
Do we reject the whole point, that is, all the components? 
What else could we do?

No matter what we do, we will loose 
the low-discrepancy structure of the sequence.

We provide a simple example. Let $f$ be the probability density function of
the Gamma distribution with parameter $a$, $f(x)=x^{a - 1} \exp(-x)/\Gamma(a)$  
and let $g$ be the density of the exponential distribution with 
parameter $b$, $g(x)=b \exp(-b x)$. If $a=1.2$ and $b=0.85$, then
$f(x)\le b^{-1}g(x)$. 
We apply the rejection algorithm to some lattice rule in dimension
4, that is, the first two components are used to generate the first 
Gamma-variable while the last two components will be used to generate the 
second one. If rejection occurs in generating either of the components,
the whole 4-dimensional sample is rejected.

The resulting sequence $(x_n)_{n\ge 1}$ will have the distribution of 
two independent
$\Gamma(1.2)$ variables, so applying the corresponding CDF to the 
components gives a sequence $(u_n)_{n\ge 1}$
which is uniform in the unit square. However there is no reason why 
it should have any additional structure, like having low discrepancy or being
a $(t,4)$-sequence. Figure \ref{fig:rejection2} compares  
$(u_n)_{n\ge 1}$ with the first and third component of
the original lattice.
Of course, the whole number of points in the lattice must be 
greater than the number plotted so we can show an equal number of points in both
plots.

It can be seen that, while the points on the left still bear some similarities
to the lattice, but that they show some characteristics typical for random
numbers, i.e., they show the presence of clusters and holes.\\

\begin{figure}[h]
\begin{center}
\includegraphics[scale=0.6]{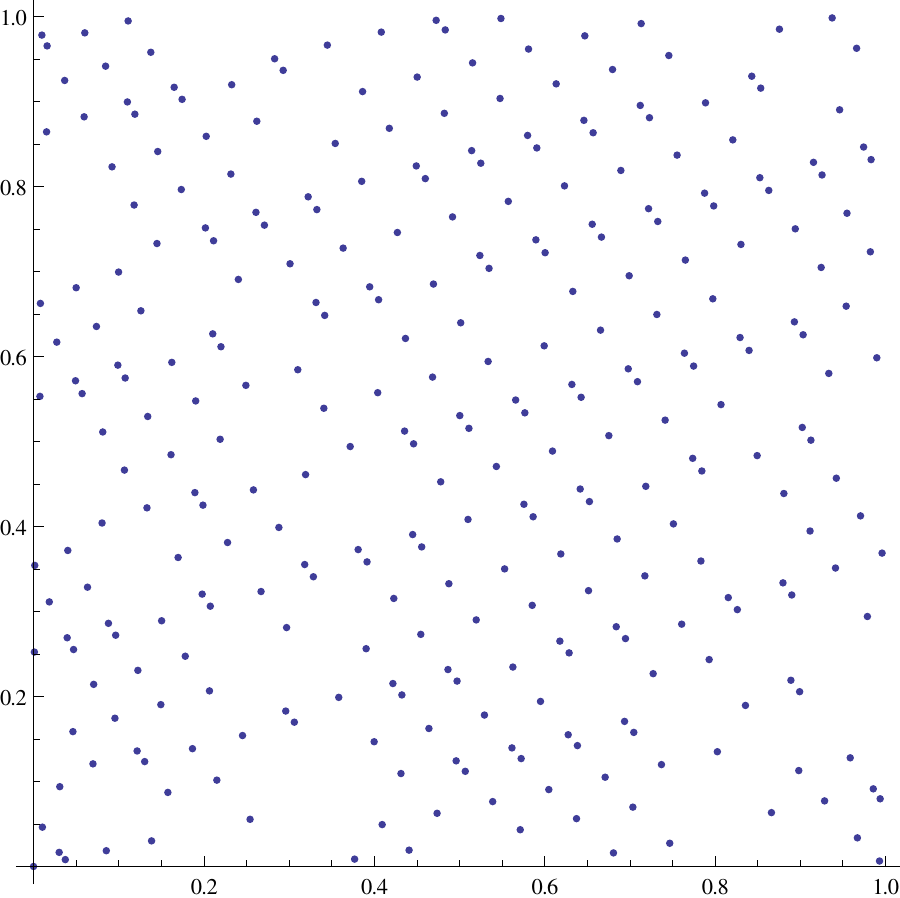}
\includegraphics[scale=0.6]{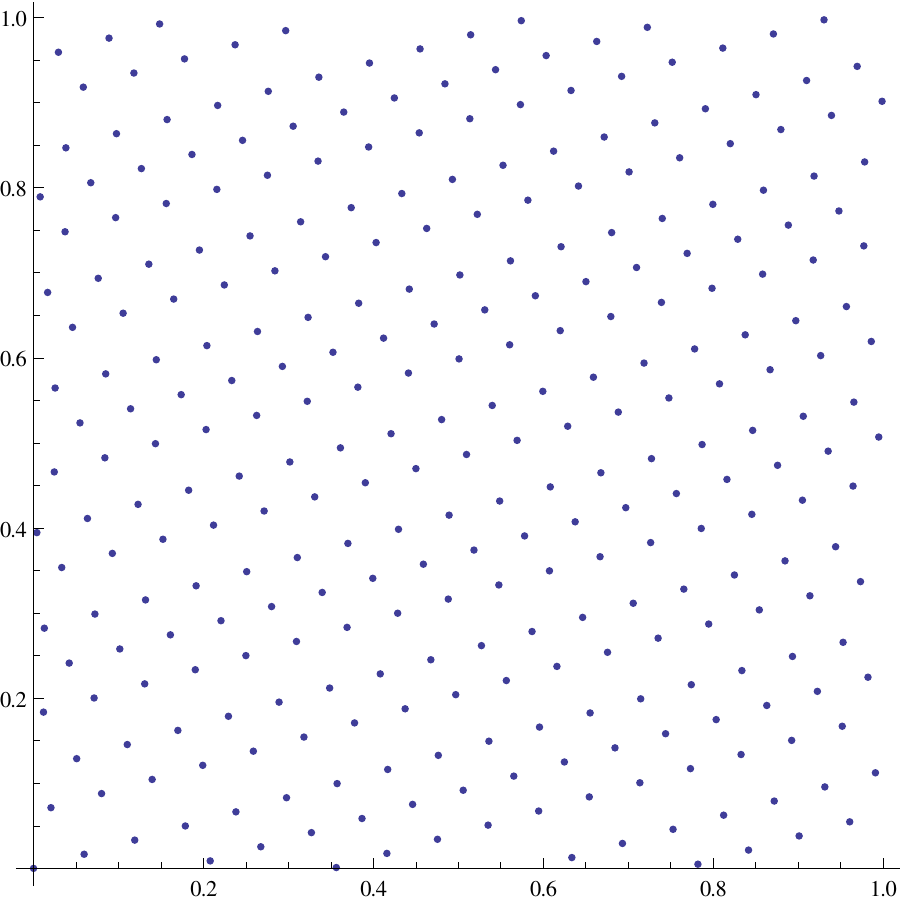}
\end{center}
\caption{Comparison of rejected and original lattice points}\label{fig:rejection2}
\end{figure}

\subsubsection{When still not to use rejection with Monte Carlo}

Another issue with the acceptance-rejection method is that it sometimes makes
the dependence of the result of
a Monte Carlo simulation on the model parameters less smooth. 
It is clear that the result of a true Monte Carlo simulation is
by definition stochastic. If one looks for model parameters 
which minimize (a function of) the integral that is computed,
then this has the paractical drawback that for example Newton's method
cannot be used. In practice it is therefore common to fix the random sequence
for the Monte Carlo simulation, i.e., the random generator is started afresh
for each set of parameters. In this sense the Monte Carlo method becomes
closer to QMC, because the point set is now deterministic.

However, if acceptance-rejection is used for the generation of
random variables, then the integral as a function of the model
parameters can still  be noisy.
The following artificial example is taken from \cite{eileze}.

\begin{example}\label{ex:gamma}
Let $\left(X_i^\lambda\right)_{i=1,\hdots,n}$ be a sequence of i.i.d.
$\mathrm{Gamma}(\lambda,1)$ random variables and $S^\lambda=\sum_{i=1}^n
X^\lambda_i$. Let further $f(s):=s-\bar{\lambda}\cdot n$.
 
We want to approximate
$$
\alpha(\lambda) = \mathbb{E}\left[f(S^\lambda)\right]
$$
by the estimator
$$
\hat\alpha_N(\lambda)=\frac 1N\sum_{j=1}^N f(S^\lambda_j)
$$
for different values of $\lambda$, 
$\lambda\in(\bar{\lambda}-\epsilon,\bar{\lambda}+\epsilon)$.
There are two scenarios:
\begin{enumerate}
\item We use a Monte Carlo method and acceptance-rejection with
a suitable exponential distribution as dominating function. The pseudo 
number generator is restarted for every choice of $\lambda$, so that
in fact we use the same sequence for every integral evaluation.
The reason for this is that otherwise $\hat\alpha(\lambda)$ will
be by itself random.

\item We use a low discrepancy quasi-Monte Carlo sequence (here: a Sobol
sequence) together with the inverse transform method. 
\end{enumerate}

We draw those functions for $n=5,N=1024,\bar{\lambda}=2$ and $\epsilon=0.2$, where $\lambda$ changes in steps of $0.001$.
In Figure \ref{fig:mc_rejection} one can see quite some noise while in
Figure \ref{fig:qmc_invers} the graph is very smooth. 

Smoothness is of importance if, for example, one wants to minimize
$\alpha(\lambda)$. An application would be calibration of a financial 
model to market data.
\end{example}  

\begin{figure}[h!tb]
\centering\includegraphics[width=9cm]{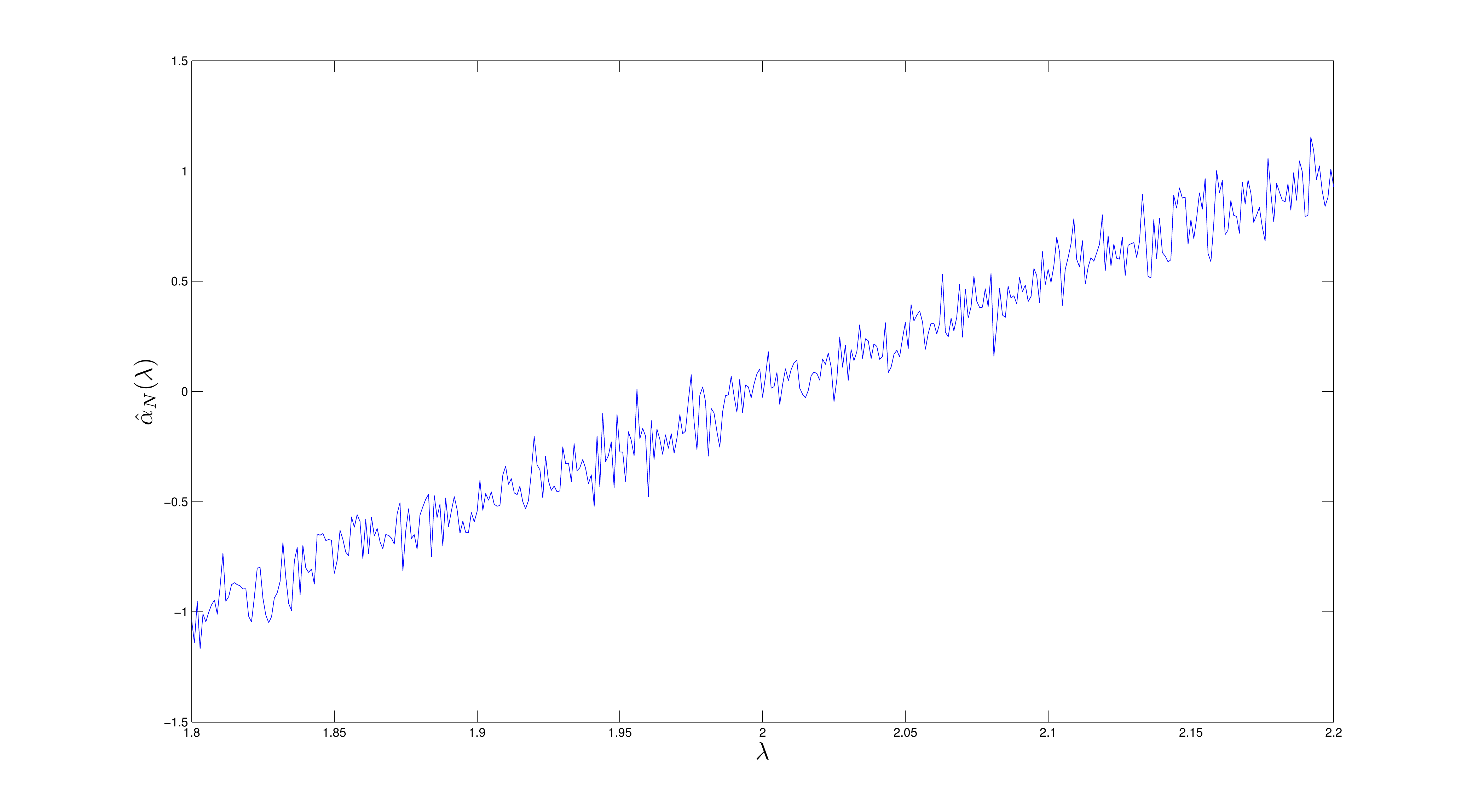}\\
\caption{Acceptance-rejection method with a fixed Monte Carlo point set}\label{fig:mc_rejection}
\end{figure}

\begin{figure}[h!tb]
\centering\includegraphics[width=9cm]{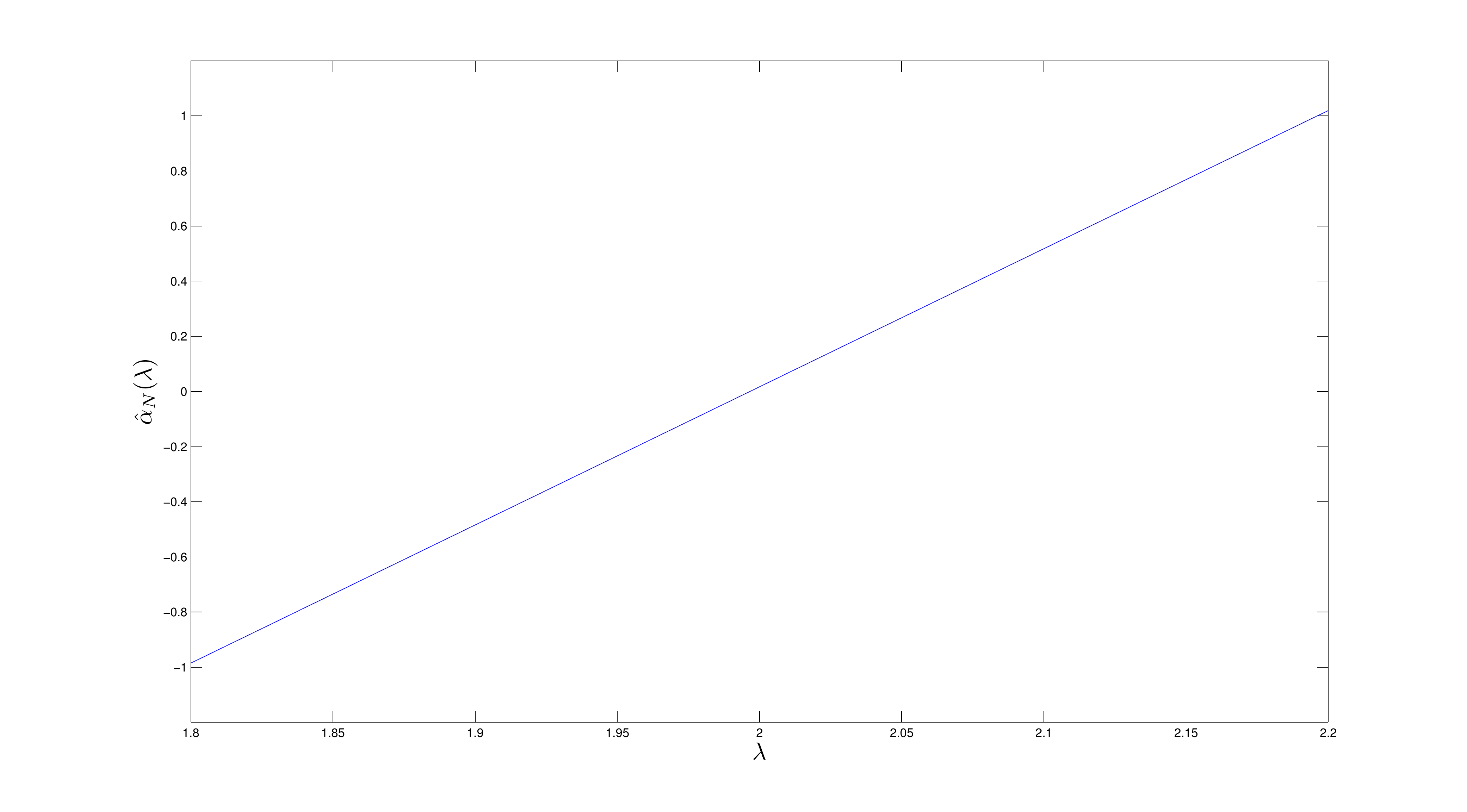}\\
\caption{Inverse transform method with a quasi-Monte Carlo point set}\label{fig:qmc_invers}
\end{figure}

\subsection{Generation of Brownian paths}\label{sc:generation-brownian}

Many problems from finance, but also from physics, encompass phenomena 
which are modeled by a Brownian motion. In this section we give the basic 
definition and describe some methods for sampling from Brownian motion.

\subsubsection{Brownian motion -- definition and properties}
\label{sec:brownian-motion}

\begin{definition}
A {\em standard Brownian motion} $B$ in $\R^d$ is a stochastic process
in continuous time, defined on some  probability space $(\Omega,\Sigma,\P)$, having the following properties:

\begin{enumerate}
\item $B_0=0$ almost surely;
\item $B$ has {\em stationary increments}, that is, for any $s,t\ge 0$ 
the random variables $B_{t+s}-B_t$ and $B_{s}$ have the same distribution; 
\item $B$ has {\em independent increments}, that is, 
for any $n\in\N$ and any $t_1,\ldots,t_n\in [0,\infty)$ with 
$t_0:=0< t_1<t_2<\ldots<t_n$, 
the random variables $B_{t_1}-B_{t_0},\ldots,B_{t_n}-B_{t_{n-1}}$ are
independent;
\item $\sqrt{\frac{1}{t}}B_t$ is a standard normal $\R^d$-valued random variable for every $t\ge 0$; 
\item $B$ has {\em continuous paths}, that is, for each $\omega\in \Omega$ 
the mapping
$t\longmapsto B_t(\omega)$ is continuous.
\end{enumerate}
\end{definition}

For applications we usually only need to evaluate the Brownian path at finitely 
many nodes $t_1,\ldots,t_d$. We therefore define a \notion{discrete
Brownian path} with discretization $0<t_1<\ldots<t_d$ 
as a Gaussian vector $(B_{t_1},\ldots,B_{t_d})$ with mean zero and
covariance matrix 
\[
\big(\min(t_j,t_k)\big)_{j,k=1}^d=
\left(
\begin{array}{cccccccc}
t_1&t_1&t_1&\ldots&t_1\\
t_1&t_2&t_2&\ldots&t_2\\
t_1&t_2&t_3&\ldots&t_3\\
\vdots&\vdots&\vdots&\ddots&\vdots\\
t_1&t_2&t_3&\ldots&t_d
\end{array}
\right)\,.
\]

\subsubsection{Classical constructions}

There are three classical constructions of discrete Brownian paths:
\begin{itemize}
\item the \notion{forward method}\index{forward method}, also known as \notion{step-by-step method} or
\notion{piecewise method} 
\item the \notion{Brownian bridge construction}\index{Brownian bridge construction} or \notion{L\'e{}vy-Ciesielski construction}
\item the \notion{principal component analysis construction}{} (PCA construction)
\index{PCA construction}
\end{itemize}

The forward method is also the most straightforward one: given
a standard normal vector $X=(X_1,\ldots,X_d)$ the discrete Brownian path
is computed inductively by
\[
B_{t_1}=\sqrt{t_1} X_1\,,\quad B_{t_{k+1}}=B_{t_{k}}+\sqrt{t_{k+1}-t_k}X_{k+1}\,.
\]
Using that $\E(X_jX_k)=\delta_{jk}$, it is easy to see that 
$(B_{t_1},\ldots,B_{t_d})$ has the required correlation matrix.
Besides its simplicity, the main attractivity of the forward method lies
in the fact that it is very efficient: given that the values
$\sqrt{t_{k+1}-t_k}$ are pre-computed, generation of a path takes only
generation of the normal
vector plus $d$ multiplications and $d-1$ additions. \\

An alternative construction is the Brownian bridge construction,
which allows the values $B_{t_1},\ldots,B_{t_d}$
to be computed in any given order.
The main observation that makes this possible is the following lemma,
the proof of which is left to the reader.

\begin{lemma}
Let $B$ be a Brownian motion and let $r<s<t$.

Then the conditional distribution of $B_s$ given 
$B_r,B_t$ is $N(\mu,\sigma^2)$ with 
\[
\mu=\frac{t-s}{t-r}B_s+\frac{s-r}{t-r}B_t
\text{ and }\sigma^2=\frac{(t-s)(s-r)}{t-r}\,.
\] 
\end{lemma}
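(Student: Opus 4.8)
The plan is to reduce the statement to the independent-increments property of Brownian motion together with one elementary fact about Gaussians, so that essentially no computation remains. First I would pass to the increments $X := B_r$, $Y := B_s - B_r$, $Z := B_t - B_s$. By the defining properties of Brownian motion these are independent with $X\sim N(0,r)$, $Y\sim N(0,s-r)$, $Z\sim N(0,t-s)$. Since $B_r = X$ and $B_t = X+Y+Z$, the $\sigma$-algebra generated by $(B_r,B_t)$ equals the one generated by $(X,\,Y+Z)$, so it suffices to describe the conditional law of $B_s = X+Y$ given $(X,\,Y+Z)$.

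Because $X$ is independent of $(Y,Z)$, on the event $\{X=a,\ Y+Z=b\}$ the variable $B_s=X+Y$ has the law of $a+\widetilde Y$, where $\widetilde Y$ is distributed as $Y$ conditioned on $Y+Z=b$. Thus the whole problem comes down to the following standard fact: if $Y\sim N(0,\sigma_Y^2)$ and $Z\sim N(0,\sigma_Z^2)$ are independent, then the conditional law of $Y$ given $Y+Z=b$ is normal with mean $\tfrac{\sigma_Y^2}{\sigma_Y^2+\sigma_Z^2}\,b$ and variance $\tfrac{\sigma_Y^2\sigma_Z^2}{\sigma_Y^2+\sigma_Z^2}$. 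I would prove this by writing the joint density $f_{Y,\,Y+Z}(y,b)=f_Y(y)f_Z(b-y)$, completing the square in $y$ inside the exponent, and reading off the mean and variance; alternatively one observes that $(Y,\,Y+Z)$ is jointly Gaussian and invokes the general conditional-Gaussian formula.

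Finally, substituting $\sigma_Y^2=s-r$, $\sigma_Z^2=t-s$ (hence $\sigma_Y^2+\sigma_Z^2=t-r$), $a=B_r$ and $b=B_t-B_r$ yields conditional mean
\[
B_r+\frac{s-r}{t-r}(B_t-B_r)=\frac{t-s}{t-r}\,B_r+\frac{s-r}{t-r}\,B_t
\]
and conditional variance $\tfrac{(s-r)(t-s)}{t-r}$, which is the asserted $N(\mu,\sigma^2)$ — with the evident typo corrected, namely the first coefficient multiplying $B_r$ rather than $B_s$. The only place needing care is the reduction in the first two steps: checking that conditioning on $(B_r,B_t)$ really is conditioning on $(X,\,Y+Z)$, and that independence of $X$ lets the value $a$ simply be added back to the conditional mean; after that everything is routine. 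A purely mechanical alternative is to note that $(B_r,B_s,B_t)$ is a centered Gaussian vector with covariance $\big(\min(\cdot,\cdot)\big)$ and to apply the conditional-distribution formula for a Gaussian subvector directly, but the increment decomposition makes the algebra essentially vanish.
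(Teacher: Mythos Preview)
Your proof is correct. The paper does not actually supply a proof of this lemma --- it explicitly leaves it to the reader --- so there is nothing to compare against; your increment decomposition $X=B_r$, $Y=B_s-B_r$, $Z=B_t-B_s$ followed by the conditional law of $Y$ given $Y+Z$ is a clean and standard way to fill the gap. You are also right that the stated formula contains a typo: the conditional mean should be $\frac{t-s}{t-r}B_r+\frac{s-r}{t-r}B_t$, not $\frac{t-s}{t-r}B_s+\frac{s-r}{t-r}B_t$.
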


Suppose the elements of $(B_{t_1},\ldots,B_{t_d})$ should be computed
in the order $B_{t_{\pi(1)}},B_{t_{\pi(2)}},\ldots,B_{t_{\pi(d)}}$ for some
permutation $\pi$ of $d$ elements. In computing $B_{t_{\pi(j)}}$
we need to take into account the previously computed elements, 
and at most two of those are of relevance, the one next to $\pi(j)$ on the
left and the one next to $\pi(j)$ on the right: 
define for every $j\in\{1,\ldots,n\}$ two sets,
\begin{align*}
L(j)&:=\{k:k<\pi(j) \;\text{and}\;\pi^{-1}(k)<j\}\\
R(j)&:=\{k:k>\pi(j) \;\text{and}\;\pi^{-1}(k)<j\}\,.
\end{align*}

Thus $L$ contains all the indices $k$ that are smaller than $\pi(j)$
and for which $B_{t_k}$ has already been constructed and $R$ contains all the
indices $k$ that are greater than $\pi(j)$ and for which $B_{t_k}$ has already
been constructed. Now define
\begin{align*}
l(j):=\left\{
\begin{array}{cccc}
0 & \text{if}& L_j=\emptyset\\
\max L_j &  \text{if}& L_j\ne\emptyset
\end{array}
\right.\\
r(j):=\left\{
\begin{array}{cccc}
\infty & \text{if}& R_j=\emptyset\\
\min R_j &  \text{if}& R_j\ne\emptyset
\end{array}
\right.
\end{align*}
and set $B_{t_0}=0$, 
\[
B_{t_{\pi(j)}}:=\left\{
\begin{array}{ccccc}
B_{t_{l(j)}}+\sqrt{t_{\pi(j)}-t_{l(j)}}X_j & \text{if}& r(j)=\infty\\[0.9em]
\begin{array}{r}
\frac{t_{r(j)}-t_{\pi(j)}}{t_{r(j)}-t_{l(j)}}B_{t_{l(j)}}
+\frac{t_{\pi(j)}-t_{l(j)}}{t_{r(j)}-t_{l(j)}}B_{t_{r(j)}}\\
+\sqrt{\frac{(t_{\pi(j)}-t_{l(j)})(t_{r(j)}-t_{\pi(j)})}{t_{r(j)}-t_{l(j)}}}X_j
\end{array}
&\text{if}& r(j)<\infty\,,
\end{array}
\right.
\]
where $X=(X_1,\ldots, X_d)$ is a standard normal random vector.

It is easy to check that the 
vector $(B_{t_1},\ldots,B_{t_d})$ constructed in that way has again
covariance matrix $(\min(t_j,t_k))_{j,k}$. 
The functions $l$ and $r$, as well as the factors of 
$B_{t_{l(j)}}$, $B_{t_{r(j)}}$, $Z_j$, do not depend on the random vector 
$X$ so 
they can be pre-computed. In some special
cases the functions $l$ and $r$ can be computed explicitly, for example if
the $\pi(t_j)$ are the first $n$ elements of the van der Corput sequence or
of the $\{k\alpha\}$-sequence with $\alpha=\frac{1+\sqrt{5}}{2}$, see
\cite{lalesch}. Therefore the Brownian bridge construction is also
very efficient: besides the generation of the vector $X$, computation
of one sample uses at most $2 d$ additions and $3 d$ multiplications. 

Moreover, we see that the forward construction is a special case of the
Brownian bridge construction where $\pi$ is the identical permutation.\\

The PCA construction exploits the fact that the correlation matrix
of $(B_{t_1},\ldots,B_{t_d})$ is positive definite and can therefore 
be written in the form $V D V^{-1}$ for a diagonal matrix $D$ with positive 
entries and an 
orthogonal matrix $V$. $D$ can be written as $D=D^\frac{1}{2}D^\frac{1}{2}$,
where $D^\frac{1}{2}$ is the element-wise positive square root of $D$.
Now the PCA construction from a standard normal random vector $X$ is
given by
\[
(B_{t_1},\ldots,B_{t_d})^\top=V D^\frac{1}{2} X\,.
\]
The disadvantage of the PCA for high-dimensional problems is that 
the matrix-vector multiplication, having computational complexity $O(d^2)$,  
becomes comparatively costly. 
Keiner and Waterhouse \cite{kewa} describe an approximate PCA for which the
cost of 
matrix-vector multiplication is $O(d\log d)$.

\subsubsection{What is wrong about the forward construction?}

We have provided three different constructions of Brownian paths with one
standing apart in that it is clearly the most simple one. So why not use
the forward construction for every application?

The answer is that theory predicts a big integration error for QMC
if dimensions are big and the number of integration nodes is of realistic
order, like a couple of millions only. But one may have the hope that if 
only a limited number input parameters have significant importance for
the result, then QMC might behave very similar as in a low dimensional
integration problem.

Figure \ref{fig:brownian-strat} shows the influence of input 
parameters on the whole discrete path. We compare the forward construction
on the left with the Brownian bridge 
construction on the right. In the two upper plots all but the first input 
variables are held fixed. We see that the influence of the first variable 
on the overall behavior of the path (in an informal sense) is bigger
for the Brownian bridge construction. 

In the two lower plots all but the 7th input variables are held fixed.
We see that in the forward construction only values of the path after the
seventh node are influenced, but the overall influence is only slightly 
smaller than that of the first variable. In contrast, the influence of the
seventh variable in the Brownian bridge construction is restricted to 
the third quarter and is much smaller than that of the first variable. 
 
\begin{figure}[h]
\begin{center}
\includegraphics[scale=0.6]{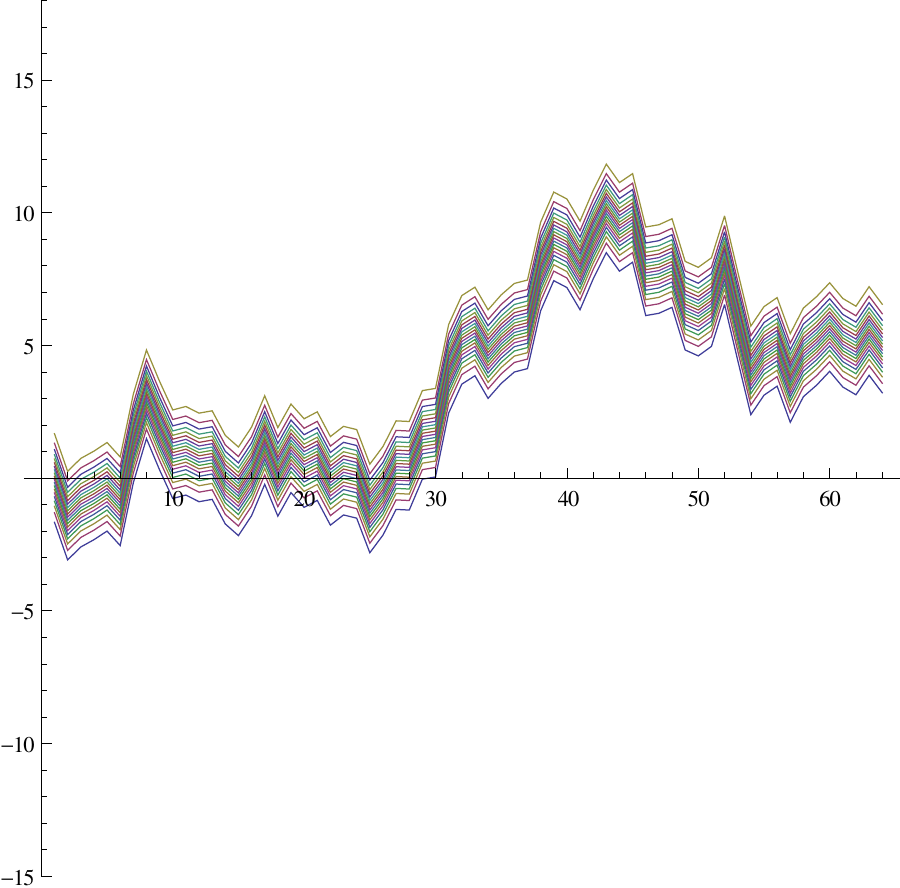}
\includegraphics[scale=0.6]{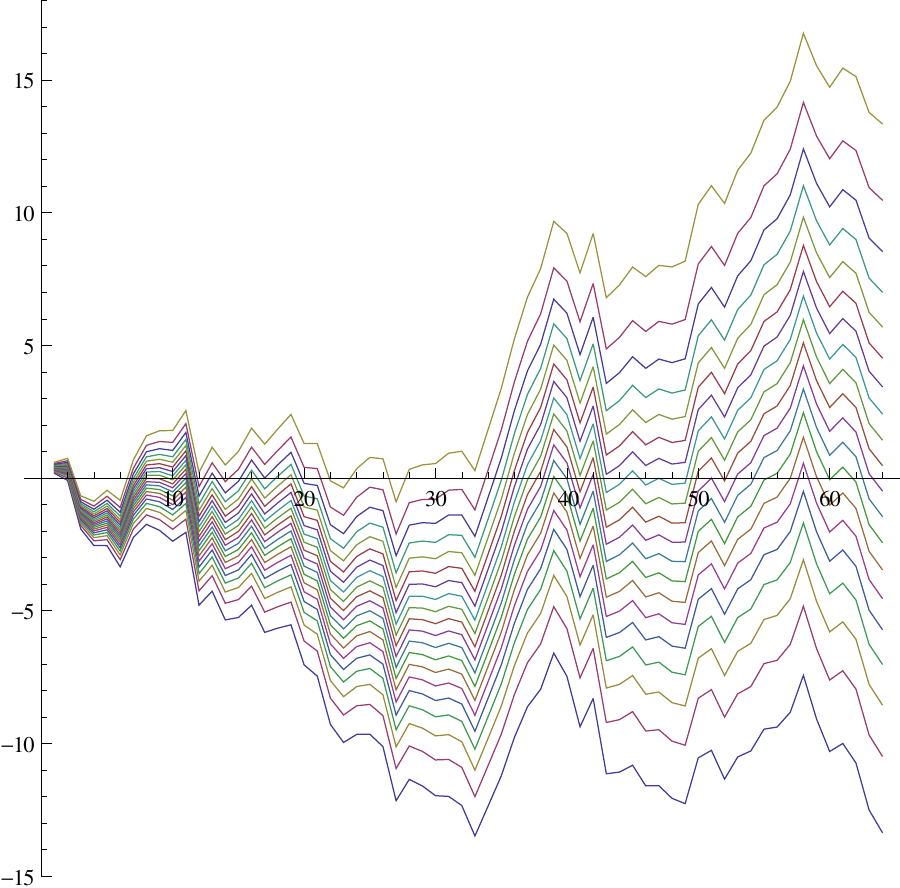}

\includegraphics[scale=0.6]{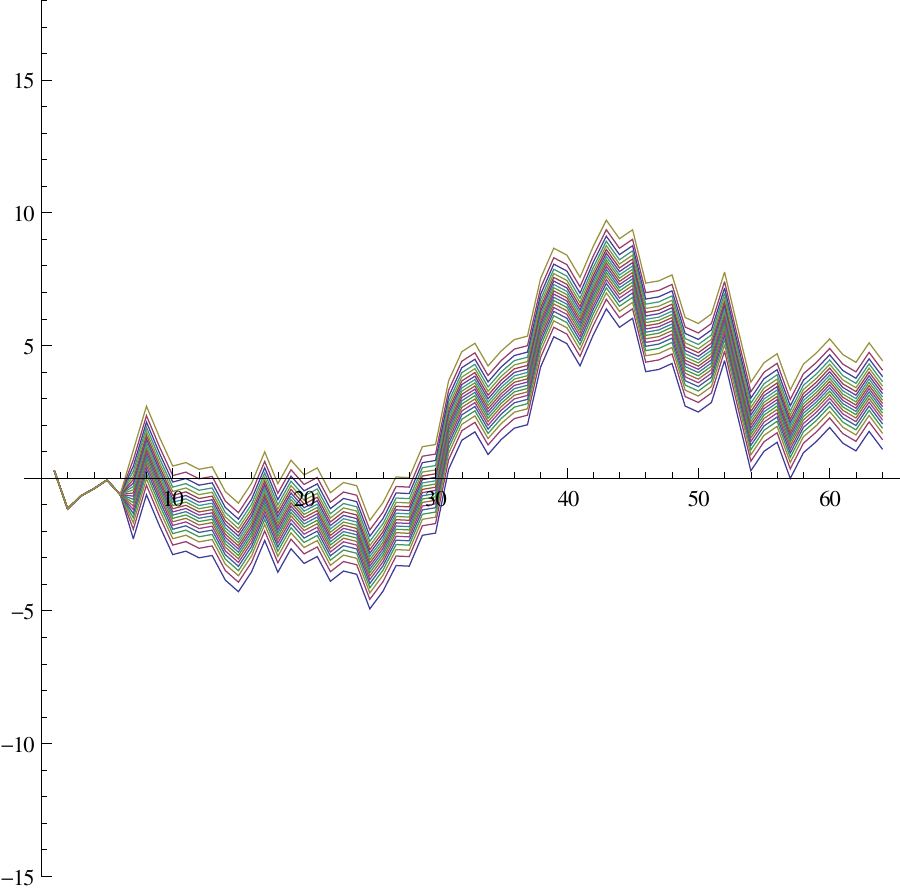}
\includegraphics[scale=0.6]{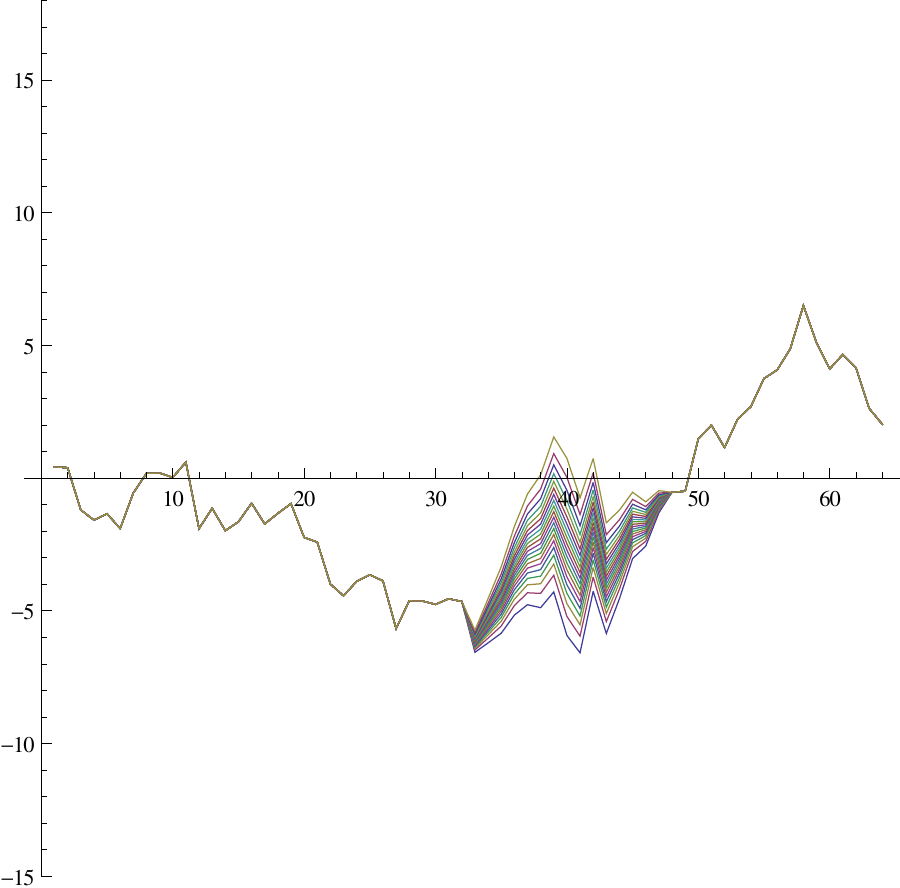}
\end{center}
\caption{Paths of Brownian motion constructed with the forward construction (left) and the Brownian bridge construction (right). All but one parameters are fixed.}\label{fig:brownian-strat}
\end{figure}
The above notion of ``behaving like a low dimensional problem'' is made 
precise in \cite{caflisch97} with the notion of {\em effective dimension}.
It must be added though, that despite of its popularity the concept of
effective dimension alone does not fully explain the success of the alternative
constructions. There is a great number of authors who investigated this problem
and it is still largely unsolved at the present. 

To answer the question posed in the header: there is nothing wrong with the 
forward construction, but for some classes of 
problems other constructions achieve lower
errors, at least empirically. For other problems the forward construction may 
be just fine, as for example in the example due to \cite{papa}, which will
also be one of the examples in Section \ref{sec:examples}.

\subsubsection{Evenly spaced discretization nodes}
The case where the $t_j$ are evenly spaced is 
of special interest as will become apparent soon.
In that case the covariance matrix equals
\[
\Big(\frac{1}{d}\min(j,k)\Big)_{j,k=1}^d=
\frac{1}{d}\left(
\begin{array}{cccccccc}
1&1&1&\ldots&1\\
1&2&2&\ldots&2\\
1&2&3&\ldots&3\\
\vdots&\vdots&\vdots&\ddots&\vdots\\
1&2&3&\ldots&d
\end{array}
\right)\,.
\]
We will denote this matrix by $\Sigma^{(d)}$ or,
if there is no danger of confusion, simply by $\Sigma$.

Note that we can compute the Cholesky decomposition of $\Sigma$ rather
easily:
$\Sigma^{(d)}=S S^\top$, where
\[
S=S^{(d)}:=\frac{1}{\sqrt{d}}
\left(
\begin{array}{cccccccc}
1&0&0&\ldots&0\\
1&1&0&\ldots&0\\
1&1&1&\ldots&0\\
\vdots&\vdots&\vdots&\ddots&\vdots\\
1&1&1&\ldots&1
\end{array}
\right)\,.
\]
Note that if $y=(y_1,\ldots,y_d)$ is a vector in $\R^d$, then 
$Sy$  is the cumulative sum over $y$ divided by $\sqrt{d}$,
\[
Sy=\frac{1}{\sqrt{d}}(y_1,y_1+y_2,\ldots,y_1+\ldots+y_d)\,.
\]
We have the following two easy lemmas:
\begin{lemma}
Let $A$ be any  $d\times d$ matrix with $A A^\top=\Sigma$ and let 
$X$ be a standard normal vector. Then $B=A X$ is a discrete Brownian
path with discretization $\frac{1}{d},\frac{2}{d},\ldots,\frac{d-1}{d},1$.
\end{lemma}

\begin{proof}
Since every linear combination of independent normal random variables
is still normal, $AX$ is normal. We compute the covariance matrix:
\begin{align*}
\E\left((AX)_j(AX)_k\right)
&=\E\left(\sum_{l=1}^dA_{jl}X_l\sum_{m=1}^dA_{km}X_m\right)\\
&=\sum_{l=1}^d\sum_{m=1}^dA_{jl}A_{km}\E\left(X_lX_m\right)\\
&=\sum_{l=1}^dA_{jl}A_{kl}=(AA^\top)_{jk}=\Sigma_{jk}\,.
\end{align*}
\end{proof}

\begin{lemma}\label{th:papa}
Let $A$ be any $d\times d$ matrix with $A A^\top=\Sigma$. Then 
there is an orthogonal $d\times d$ matrix $V$ with $A=SV$. Conversely, 
$SV(SV)^\top=\Sigma$ for every orthogonal $d\times d$ matrix $V$.
\end{lemma}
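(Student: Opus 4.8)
The plan is to exploit the fact that $S=S^{(d)}$ is invertible, which is immediate since it is lower triangular with all diagonal entries equal to $\frac{1}{\sqrt{d}}\neq 0$. Once invertibility of $S$ is in hand, the statement essentially writes itself.

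For the first assertion, given a $d\times d$ matrix $A$ with $AA^\top=\Sigma$, I would simply \emph{define} $V:=S^{-1}A$, so that $A=SV$ holds by construction. It then remains to check that $V$ is orthogonal. Using $\Sigma=SS^\top$ (the Cholesky decomposition already recorded in the excerpt) one computes
\[
VV^\top=S^{-1}A\,A^\top (S^{-1})^\top=S^{-1}\Sigma(S^\top)^{-1}=S^{-1}SS^\top(S^\top)^{-1}=I\,,
\]
so $V$ is indeed orthogonal. This settles the ``there exists $V$'' part.

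For the converse, let $V$ be any orthogonal $d\times d$ matrix. Then
\[
SV(SV)^\top=SVV^\top S^\top=S S^\top=\Sigma\,,
\]
using $VV^\top=I$ and $SS^\top=\Sigma$ again. This is all that is required.

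There is no real obstacle here; the only point one has to be a little careful about is noting explicitly that $S$ is invertible so that $S^{-1}$ makes sense, and being consistent with transposes when passing $\Sigma=SS^\top$ through the inverses. (Note also that the lemma claims only existence, not uniqueness, of $V$, so no further work is needed.)
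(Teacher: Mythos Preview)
Your proof is correct and is essentially identical to the paper's own argument: both define $V:=S^{-1}A$ using invertibility of $S$, verify $VV^\top=\id_{\R^d}$ via $AA^\top=SS^\top$, and handle the converse by $VV^\top=\id_{\R^d}$. If anything, you supply slightly more detail (the reason $S$ is invertible, and the converse computation written out in full).
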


\begin{proof}
Suppose $A A^\top=\Sigma$, such that $AA^\top=SS^\top$. 
Note that $S$ is invertible and  define 
$V=S^{-1}A$. Then 
\[
VV^\top=S^{-1}AA^\top (S^{-1})^\top
=S^{-1}SS^\top (S^{-1})^\top=\mathrm {id}_{\R^d}\,,
\]
showing that $V$ is orthogonal. The converse follows from the
fact that for orthogonal $V$ we have $V^\top=V^{-1}$.
\end{proof}

For evenly spaced discretization nodes the orthogonal matrices 
corresponding to the classical matrices can often be given explicitly.
The orthogonal transform corresponding to the forward method is the identical
mapping on the $\R^d$. For $d=2^k$, the orthogonal transform corresponding
to the Brownian bridge construction where $B$ is computed in the order
$B_1,B_\frac{1}{2},B_\frac{1}{4},B_\frac{3}{4},B_\frac{1}{8},B_\frac{3}{8},B_\frac{5}{8},\ldots$, is given by the inverse Haar transform,
see \cite{leo2012}. For the PCA, the orthogonal transform has been given
by Scheicher, and it has been shown that the computation complexity is
$O(d \log(d))$, see \cite{schei}. The advantage of the representation of $A$
in Lemma \ref{th:papa} is that there are many orthogonal matrices that allow
for fast matrix vector multiplication, that is, a path of length
$d$ can be computed using $O(d\log(d))$ operations. Examples include
the Walsh transform, discrete sine/cosine transform, Hilbert transform and 
others. See again \cite{leo2012}.\\

Coming back to the general case of unevenly spaces discretization nodes
we note the following: suppose you have nodes $0<t_1<\ldots<t_d$. 
We may compute an evenly spaced path $(B_\frac{1}{d},\ldots,B_\frac{1}{d})$ 
using our favorite orthogonal transform, then compute
\[
\tilde B=\sqrt{d}\Big(\sqrt{t_1}B_\frac{1}{d},\sqrt{t_2-t_1}(B_{\frac{2}{d}}-B_{\frac{1}{d}}),\ldots,\sqrt{t_d-t_{d-1}}(B_{\frac{d}{d}}-B_{\frac{d-1}{d}})\Big)\,.
\]
Then $\tilde B$ is a discrete Brownian path with discretization 
$0<t_1<\ldots<t_d$.

\subsection{Generation of L\'evy paths}\label{sec:levy}

\begin{definition}
A {\em L\'evy process} $L$ in $\R^d$ is a stochastic process
in continuous time, defined on some probability space $(\Omega,\cF,\P)$, 
having the following properties:

\begin{enumerate}
\item $L_0=0$ almost surely;
\item $L$ has {\em stationary increments}, that is, for any $s,t\ge 0$,
the random variables $L_{s+t}-L_t$ and $L_s$ have the same distribution; 
\item $L$ has {\em independent increments}, that is, 
for any $n\in\N$ and any $t_1,\ldots,t_n\in [0,\infty)$ with 
$t_0:=0< t_1<t_2<\ldots<t_n$, 
the random variables $L_{t_1}-L_{t_0},\ldots,L_{t_n}-L_{t_{n-1}}$ are
independent;
\item $L$ is {\em continuous in probability}, i.e., for all $t\ge 0$ and
$c>0$,
\[
\lim_{h\rightarrow 0}\P(|L_{t+h}-L_t|>c)=0\,.
\]
\end{enumerate}
Without loss of generality one may also require (see \cite[Chapter I.4, Theorem 30]{protter})
\begin{enumerate}
\setcounter{enumi}{4}
\item $L$ has {\em c\`adl\`ag paths}, that is, for each $\omega\in \Omega$ 
the mapping 
$t\longmapsto L_t(\omega)$ is right-continuous with limits from the left.
\end{enumerate}
\end{definition}

We will concentrate on discrete paths, and therefore properties 4. and 5.
are of minor importance for our purpose. One property that follows from 
the above is that a L\'evy process is already completely characterized by
the distribution of $L_1$. Examples are provided by the Poisson process,
where $L_1$ has Poisson distribution and by Brownian motion, where
$L_1\sim N(0,1)$. 
The L\'evy Kintchine formula (see \cite[Chapter I.4, Theorem 43]{protter})
states that for any L\'evy process there are numbers $b,\sigma\in \R$ and a
measure $\nu$ on $\R\backslash \{0\}$
with $\int_{|x|< 1}x^2 \nu(dx)<\infty$ such that the characteristic function
of $L_t$ is given by
\[
\phi_{L_t}(u)=\E\left(\exp(\icomp uL_t)\right)=\exp(t\,\psi(u))
\]
with
\[
\psi(u)=\icomp bu-\frac{\sigma^2}{2}u^2+\int_{|x|\ge 1}(\exp(\icomp ux)-1)\nu(dx)
+\int_{|x|<1}(\exp(\icomp ux)-1-\icomp ux)\nu(dx)\,.
\]
Thus the distribution of $L_t$ (and therefore of an increment $L_{t+s}-L_t$) 
can be computed via Fourier inversion. For some distributions
like the Normal,
Poisson, and  
Gamma distribution, the density of the increment can be given explicitly.

It is actually straightforward to construct a discrete L\'evy path on a given
set of nodes $0<t_1<\ldots<t_d$ : let $F_t^{-1}$ denote the inverse of the
distribution function of $L_t$. Let $U_1,\ldots,U_d$ be independent 
$U(0,1)$ random variables. Define
\begin{align*}
L_{t_1}&:=F_{t_1}^{-1}(U_1)\\
L_{t_k}&:=L_{t_{k-1}}+F_{t_k-t_{k-1}}^{-1}(U_k)\,.
\end{align*}
That is, the forward method works immediately. The other constructions have no
direct generalizations to L\'evy processes, except for special cases for
which the conditional distribution of $L_m$ given $L_l,L_r$ for $l<m<r$ can
be computed. One such example is the Gamma process, the L\'evy process
for which $L_t$ has gamma distribution with parameters $(t\gamma,\lambda)$,
$\gamma,\lambda\in (0,\infty)$, that is,
\[
\P(L_t\le z)=\int_0^z \frac{x^{\gamma-1}}{\lambda^\gamma \Gamma(\gamma)}\exp(-x/\lambda)dx\,,
\]
where it is shown in \cite{lecuyer} that a Bridge construction is possible
for this process and also for the variance-gamma process, which is a 
L\'evy process of the form $t\mapsto W_{L_t}$, where $L$ is a gamma process
and $W$ is Brownian motion.

However, there is a simple trick, first used in \cite{leo} for the Brownian 
bridge and later, but independently, in \cite{imaitan09} for general orthogonal 
transforms, that recovers some of the qualitative features of those transforms:
we may rewrite the forward construction of the discrete L\'evy Path as
\begin{align*}
L_{t_1}&:=F_{t_1}^{-1}(\Phi(Y_1))\\
L_{t_k}&:=L_{t_{k-1}}+F_{t_k-t_{k-1}}^{-1}(\Phi(Y_k))\,,
\end{align*}
where $Y_1,\ldots,Y_d$ are independent standard normal variables and 
$\Phi$ is the standard normal CDF. The orthogonal transform is now employed
simply in that the $Y_1,\ldots,Y_d$ are generated from
our input variables $X_1,\ldots,X_d$ by multiplication with the orthogonal
matrix, i.e. $Y=VX$. 

Figure \ref{fig:nig-strat} illustrates the effect of this method on the 
construction of discrete normal inverse Gaussian\footnote{Here the increments
have been sampled from the NIG distribution for simplicity. In general, sampling
from $L_t$ for $t\ne 1$ requires Fourier inversion.} (NIG) L\'evy paths. 
The figure on the right shows
the effect of the 7th input variable. In comparison to the corresponding 
Brownian motion example from figure \ref{fig:brownian-strat} we see that
the effect is less localized, but it still the seventh variable mostly
influences the behavior of the path on the interval
$[\frac{1}{2},\frac{3}{4}]$. Note that the plots are slightly misleading
since they interpolate linearly between the discretization points and
thus look like continuous functions. In reality, the paths of an NIG
process are (with probability 1) 
discontinuous with infinitely many jumps in every non-empty open
interval. It is important to keep this in mind if, for example, some 
characteristic of the first entry time of the path into some set is to be 
computed, as is the case, e.g., for barrier options. 

\begin{figure}[h]
\begin{center}
\includegraphics[scale=0.6]{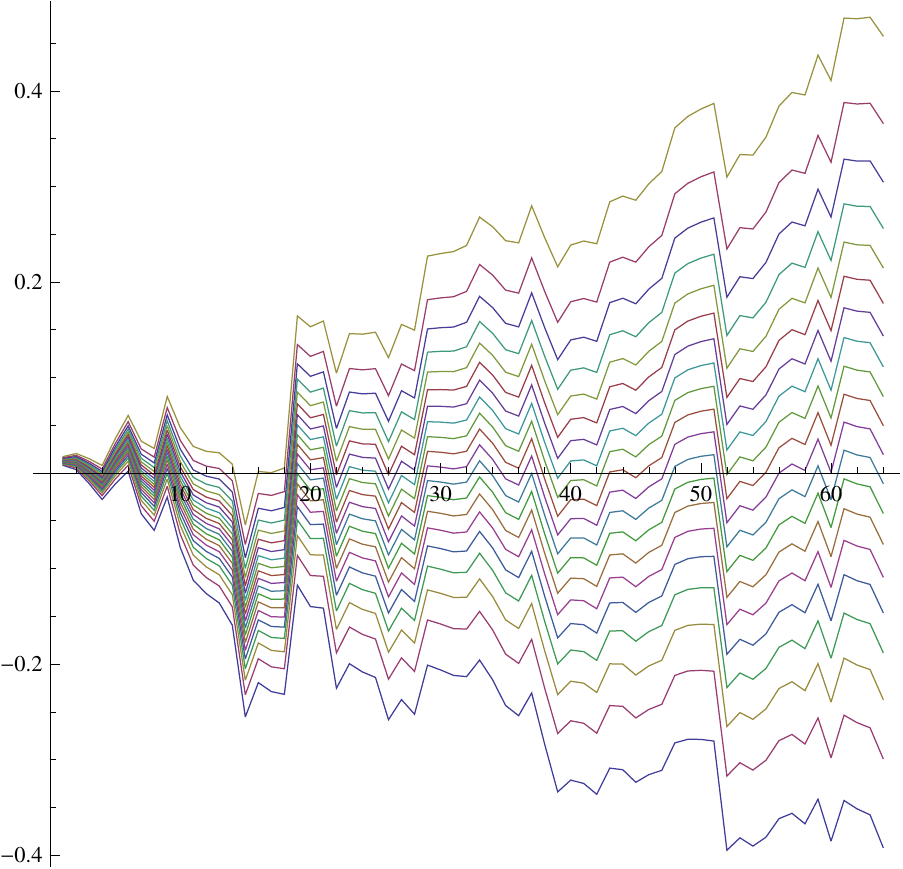}
\includegraphics[scale=0.6]{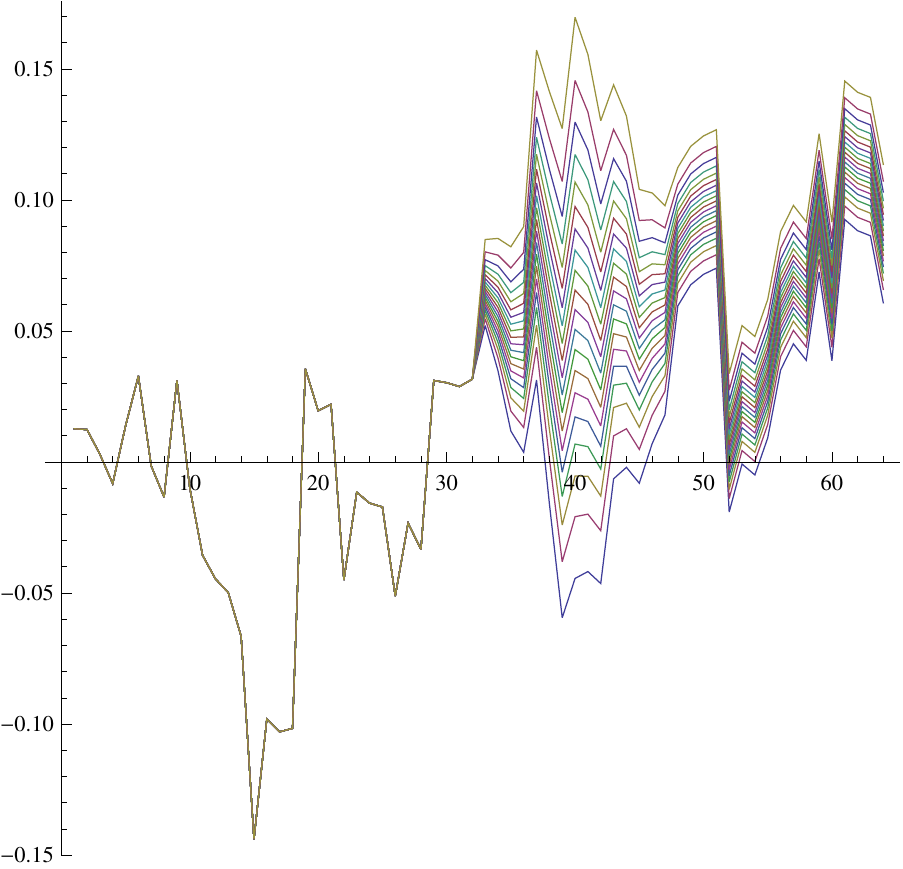}
\end{center}
\caption{NIG process paths constructed with Brownian bridge orthogonal transform. Left figure: all but the first variables held fixed. Right figure: all but the 7th variable held fixed.}\label{fig:nig-strat}
\end{figure}

\subsection{Multilevel (quasi-)Monte Carlo}

Multilevel Monte Carlo is a technique for speeding up Monte Carlo simulation,
especially for SDE models.
It has gained
a lot of recognition over the last couple of years, starting with the pioneering 
work by Giles \cite{giles} and 
Heinrich \cite{heinrich}.
We give a short account of the method.

Suppose we want to approximate $\E(Y)$ for some random variable
$Y$ which has finite expectation. Suppose further that we have a sequence of 
sufficiently regular functions $f^\ell:\R^{d_\ell}\rightarrow \R$ such that 
\begin{equation}
\label{eq:mc_approx}\lim_{\ell\rightarrow\infty} \E(f^{\ell}(X^\ell))=\E(Y)\,,
\end{equation}
where for each $\ell\ge 0$, $X^\ell$ denotes a $d_\ell$-dimensional 
standard normal vector. In most cases the $f^\ell$ will be the discrete 
versions of a function defined on the Brownian paths with $d_\ell$ 
discretization nodes, and typically $d_\ell=2^\ell$. 
A standard examples is provided by the 
fixed strike Asian option, which has payoff
\[
f(B):=\max\left(\frac{1}{T}\int_0^T S_0 \exp\left(\sigma\sqrt{T} B_{t/T}+(r-\frac{\sigma^2}{2})t\right) dt\,,\,K\right)\,,
\]
where $B$ is a standard Brownian motion, $S_0$ is the stock price at time $0$, 
$K$ is the strike of the option, $\sigma$ is the volatility and $r$ is the interest rate. 
$B$ will be approximated by a discrete path of the form $SV^\ell X^\ell$ where,
for example, $V^\ell$ is the orthogonal transform corresponding 
to $d_\ell=2^\ell$-dimensional PCA.

Eqn. (\ref{eq:mc_approx}) states that there exists
a sequence of algorithms which approximate $\E(Y)$ with increasing accuracy.
For example, if $f^\ell(X^\ell)$ has finite variance, we can approximate
$\E(Y)$ by $\frac{1}{N}\sum_{k=0}^{N-1} f^\ell(X^\ell_k)$  using sufficiently
large $\ell$ and $N$, where $(X^\ell_k)_{k\ge 0}$ is a sequence of 
independent standard normal vectors.

Usually, evaluation of $f^\ell(X^\ell_k)$ becomes more costly with increasing
$\ell$. Multilevel methods sometimes help us to save significant
proportions of computing time by computing more samples for the coarser
approximations, which need less computing time but have higher variance.

We have, for large $L$, 
\begin{equation}
\label{eq:mlmcrep}
\begin{aligned}
\E(Y)&\approx \E\left(f^L(X^L)\right)\\
&=\E\left(f^0(X^0)\right)+\sum_{\ell=1}^L \E\left(f^\ell(X^\ell)\right)-\E\left(f^{\ell-1}(X^{\ell-1})\right)\\
&=\E\left(f^0(X^0)\right)+\sum_{\ell=1}^L \E\left(f^\ell(X^\ell)\right)-\E\left(f_c^{\ell-1}(X^{\ell})\right)\\
&=\E\left(f^0(X^0)\right)+\sum_{\ell=1}^L \E\left(f^\ell(X^\ell)-f_c^{\ell-1}(X^{\ell})\right)\,,
\end{aligned}
\end{equation}
where $(f_c^\ell)_{\ell\ge 0}$ is an arbitrary sequence of functions 
$f_c^\ell:\R^{d_{\ell+1}}\rightarrow \R$ with 
$\E(f_c^{\ell-1}(X^\ell))=\E(f^\ell(X^\ell))$. The ``c'' in $f_c^\ell$ stands
for ``coarse level''. 

The most basic example for $f_c^\ell$ is given for  
$d_\ell=m^\ell$ by  $f_c^\ell=f^\ell\circ C_{m,\ell}$, where $C_{m,\ell}$
is the linear map defined by the matrix
\[
(C_{m,\ell})_{i,j}:=\left\{\begin{array}{cl}\frac{1}{\sqrt{m}}&\quad\mbox{ if }~(i-1)m+1\leq j\leq i\,m\,,\;1\le i\le m^\ell\\0& \quad\mbox{ else }\end{array}\right.\,.
\]
For example,  
\[
C_{2,\ell}:=
\left(
\begin{array}{cccccccc}
\frac{1}{\sqrt{2}}&\frac{1}{\sqrt{2}}&0&0&0&\dots&0&0\\
0&0&\frac{1}{\sqrt{2}}&\frac{1}{\sqrt{2}}&0&\dots&0&0\\
\vdots&\vdots&\vdots&\vdots&\vdots& &\vdots&\vdots\\
0&0&0&0&0&\dots&\frac{1}{\sqrt{2}}&\frac{1}{\sqrt{2}}
\end{array}
\right)\,.
\]
In general, $f_c^\ell$ is chosen in a way to get small variances for 
the $f^\ell(X^\ell)-f_c^{\ell-1}(X^{\ell})$. 

Equation (\ref{eq:mlmcrep}) becomes useful if, as is often the case in
practice, the expectation
$\E\left(f^\ell(X^\ell)-f^{\ell-1}(C_{m,\ell}X^{\ell})\right)$ 
can be approximated to the required level of accuracy using less function
evaluations $N_\ell$ for bigger $\ell$ while the costs $c_\ell$ 
per function evaluation
increases. Suppose the error  of approximation of
$\E\left(f^\ell(X^\ell)-f_c^{\ell-1}(X^{\ell})\right)$ using $N_\ell$
points is $e_\ell(N_\ell)$. We choose $N_0,\ldots,N_L$ so that 
\[
e_0(N_0)+\ldots+e_L(N_L)\le \varepsilon
\] 
while minimizing the total cost
\[
c=c_0 N_0 +\ldots+c_L N_L \,.
\]
In that way the total computation cost is typically much lower than
it would be if $\E(f^L(X^L))$ would be computed directly.

One typical situation is the numerical solution of a stochastic 
differential equation  using time discretization with
$d_\ell$ time steps and $f^\ell$ is some function on the set of solution paths.
See \cite{giles} for how to exploit this representation for Monte Carlo
simulation. See also \cite{gilwat} for the combination
of the multilevel technique with QMC.

\exclude{
In finance,
$f^\ell$ is typically of the form
$f^\ell(X)=\psi(h^\ell(X))$ for some functions
$h^\ell:\R^{d_\ell}\longrightarrow\R$  and $\psi:\R\longrightarrow\R$. In that 
context, $h^\ell$ is some function taking as its argument a discrete (geometric) Brownian path,
like the maximum or the average, and $\psi$ is the payoff that depends on the
outcome of $h^\ell$.
\begin{eqnarray}
\E\left(f^L(X^L)\right)
\nonumber&=&\E\left(f^0(X^0)\right)+\sum_{\ell=1}^L \E\left(\psi\left(h^\ell(X^\ell)\right)-\psi\left(h^{\ell-1}(C_{m,\ell}X^{\ell})\right)\right)\\
\nonumber&=&\E\left(\psi\left(h^0(X^0)\right)\right)+\sum_{\ell=1}^L \E\left(g^\ell\bigl(h^\ell_1(X^\ell),h^\ell_2(X^{\ell})\bigr)\right)\,,
\end{eqnarray}
where $h_1^\ell=h^\ell$, $h_2^\ell=h^{\ell-1}\circ C_{m,l}$ and 
$g^\ell(y_1,y_2)=\psi(y_1)-\psi(y_2)$.
}


\subsection{Examples}\label{sec:examples}

Consider the problem of valuating an Asian option in the Heston model.
We solve the SDE using the simple Euler-Maruyama method eqn. \eqref{eq:euler}.
The model parameters are
$s_0=100$, $v_0=0.3$, $r=0.03$, $\rho=0.2$, $\kappa=2$, $\theta=0.3$, $\xi=0.5$, the option
parameters are $K=100$, $T=1$. 
The SDE is solved using a two-dimensional Brownian motion with 32 equally
spaced time steps. For that we need 64 independent standard normal variables
per QMC evaluation.
Since the problem is relatively high-dimensional we want to apply an orthogonal 
transform to the input variables. It is near at hand to apply one transform
for each of the two Brownian paths, but at least for this example it seems
to be better to use one 64-dimensional transform. 

We use the classical Sobol sequence for integration. We add a
64-dimen\-sion\-al random shift to the sequence and plot the $\log_2$ of the
standard deviation over $64$ integral evaluations each using $2^m$ points of
the sequence, $m=2,\ldots,10$.

The left hand graph Figure \ref{fig:heston-conv} shows the $\log_2$ of the
standard deviation along $m$ for 4 different transforms: the identity,
``Forward'', the orthogonal
transform corresponding to the Brownian bridge (i.e., the inverse Haar
transform), ``BB'', the one corresponding to PCA and the Brownian bridge
applied separately to the inputs of the two Brownian paths, ``BB2''.
On the $x$ axis we plot the $\log_2$ of the number of integration points, i.e.,
$m$, while along the $y$ axis we plot the $\log_2$ of the standard deviation
of the result over $64$ runs.

We can see that, as in many practical examples, the PCA performs best.
Maybe surprisingly the idea of using two independent Brownian bridge 
constructions performs worse than the two combined transforms, but still 
much better than the identical transform.

We complement this graph be the corresponding one for the example from
\cite{papa}. The payoff of this ``ratchet'' option is
\[
f(S_\frac{T}{d},S_\frac{2 T}{d},\ldots,S_{T})
=\frac{1}{d}\sum_{j=1}^d 1_{[0,\infty)}
\left(S_\frac{j T}{d}-S_\frac{(j-1) T}{d}\right)S_\frac{j T}{d}\,.
\]
The errors are plotted on the right hand side of 
\ref{fig:heston-conv}. We can see that the orthogonal transforms that
were so successful in the case of an Asian option now perform worse
then the identity. 

\begin{figure}[h]
\begin{center}
\includegraphics[scale=0.6]{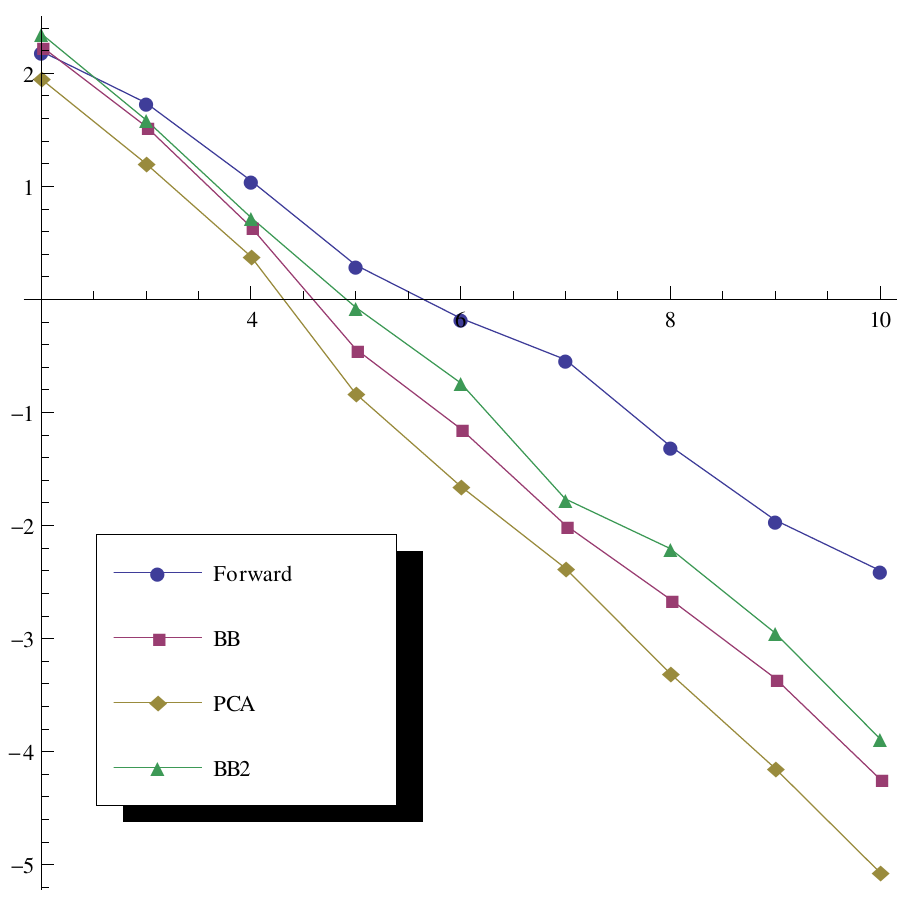}
\includegraphics[scale=0.6]{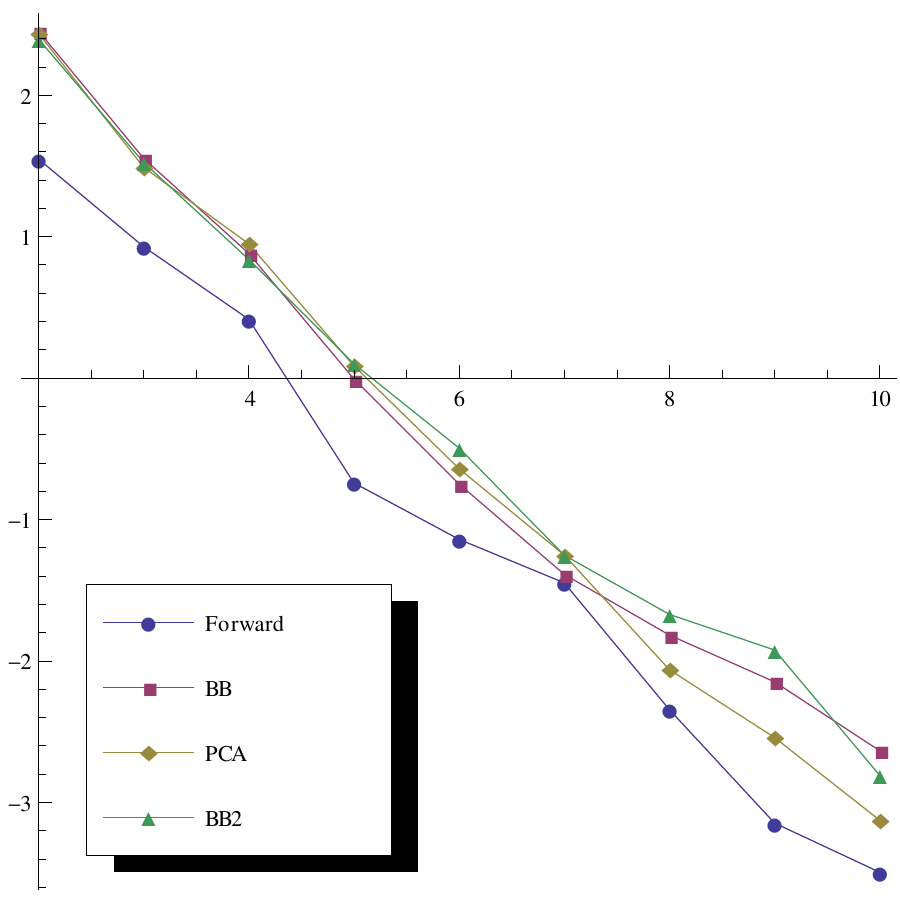}
\end{center}
\caption{Left: Convergence of the price of an Asian option under 
different transforms. Right: same graph for the ratchet option.}\label{fig:heston-conv}
\end{figure}

Thus it has to be kept in mind that the choice of the orthogonal 
transform has to be in line with the payoff function. How this should be
done exactly, and for which types of payoffs it accelerates convergence, 
is still subject to research. See for example \cite{irrgeher1,irrgeher2},
where it is tried to choose the orthogonal transform in a way that
puts as much variance as possible into  
the dependence of the first input variable. To this end the payoff is 
approximated by a linear function $g$ (``regression'') and an orthogonal
 (Householder-)transform $V$ is computed such that $g\circ V$ only depends on
$X_1$. This $V$ is taken as the orthogonal transform for the original
problem.\\

We conclude with an example in which multilevel Monte Carlo is combined with
orthogonal transforms and QMC.  We compare the multilevel QMC method
together with the regression algorithm from \cite{irrgeher1} with 
multilevel Monte Carlo and
multilevel quasi-Monte Carlo (forward and PCA sampling) numerically. For that
we choose the parameters in a Black-Scholes model as $r=0.04$, $\sigma=0.3$,
$S_0=100$, and we aim to value an Asian call option with parameters $K=100$ and
$T=1$. 
At the finest level we choose $2^{10}$ discretization
points and at each coarser level the number of points is divided in by 2, 
i.e. $L=10$ and $m=2$. The number of sample points are doubled at each level
starting with $N_L$ sample points at the finest level $L$. For the QMC
approaches we take a Sobol sequence with a random shift. In Table
\ref{tbl:asian1} we compare for different values $N_L$ both the average and the
standard deviation of the price of the Asian call option based on $1000$
independent runs. Moreover, the average computing time for one run is given in
brackets. As we can see, the regression algorithm yields the lowest standard
deviation, but the computing time of the regression algorithm is slightly 
higher than that for the forward method. The regression algorithm outperforms 
the PCA construction measured both by standard deviation and computing
time. 

\begin{table}[ht]
\hspace{-1.5cm}
\scriptsize
\begin{tabular}{l|cc|cccccc}
&	\multicolumn{2}{c|}{multilevel}    				&\multicolumn{6}{c}{multilevel QMC}\\
&	\multicolumn{2}{c|}{Monte Carlo}		&\multicolumn{2}{c}{~~forward~~}		&\multicolumn{2}{c}{~~\quad PCA\quad~~} &\multicolumn{2}{c}{~~regression~~}\\\hline
$N_L$~~ & ~average~ & ~~~stddev~~~ & ~average~ & ~~~stddev~~~ & ~average~ & ~~~stddev~~~ & ~average~ & ~~~stddev~~~ \\\hline\hline
2	& 7.717 & $0.41\times 10^{0}$	& 7.735	& $0.19\times 10^{-1}$	& 7.736	& $0.16\times 10^{-1}$	& 7.739	& $0.10\times 10^{-1}$\\
&\multicolumn{2}{c|}{(0.0057\,s)}		&\multicolumn{2}{c}{(0.0057\,s)}	&\multicolumn{2}{c}{(0.0088\,s)}	&\multicolumn{2}{c}{(0.0069\,s)}\\\hline
4	 & 7.738& $0.19\times 10^{0}$ & 7.734& $0.71\times 10^{-2}$ & 7.736& $0.44\times 10^{-2}$ & 7.738& $0.29\times 10^{-2}$\\
& \multicolumn{2}{c|}{(0.0074\,s)}&\multicolumn{2}{c}{(0.0074\,s)}&\multicolumn{2}{c}{(0.0118\,s)}&\multicolumn{2}{c}{(0.0091\,s)}\\\hline
8	 & 7.748& $0.54\times 10^{-1}$	& 7.737& $0.30\times 10^{-2}$			& 7.737& $0.14\times 10^{-2}$			& 7.736& $0.10\times 10^{-2}$\\
& \multicolumn{2}{c|}{(0.0101\,s)}	& \multicolumn{2}{c}{(0.0100\,s)}		& \multicolumn{2}{c}{(0.0165\,s)}	& \multicolumn{2}{c}{(0.0124\,s)}\\\hline
16 & 7.746& $0.40\times 10^{-1}$	& 7.736& $0.11\times 10^{-2}$& 7.737& $0.69\times 10^{-3}$& 7.736& $0.30\times 10^{-3}$\\
& \multicolumn{2}{c|}{(0.0157\,s)}	& \multicolumn{2}{c}{(0.0157\,s)} & \multicolumn{2}{c}{(0.0279\,s)}	& \multicolumn{2}{c}{(0.0194\,s)}\\\hline
32 & 7.728& $0.31\times 10^{-1}$	& 7.736& $0.49\times 10^{-3}$		& 7.737& $0.21\times 10^{-3}$	  & 7.736& $0.10\times 10^{-3}$\\
& \multicolumn{2}{c|}{(0.0266\,s)}	& \multicolumn{2}{c}{(0.0265\,s)} & \multicolumn{2}{c}{(0.0585\,s)}	& \multicolumn{2}{c}{(0.0326\,s)}\\\hline
64 & 7.739& $0.81\times 10^{-2}$	& 7.736 & $0.20\times 10^{-3}$	& 7.737	& $0.69\times 10^{-4}$	& 7.737& $0.32\times 10^{-4}$\\
& \multicolumn{2}{c|}{(0.0486\,s)}	& \multicolumn{2}{c}{(0.0484\,s)}	& \multicolumn{2}{c}{(0.1202\,s)} & \multicolumn{2}{c}{(0.0583\,s)}\\
\end{tabular}
\caption{Multilevel (Q)MC using $2^{10}$ time steps $(L=10)$. The average and the standard deviation of the option price are based on $1000$ runs. The average computing time is given in brackets.}\label{tbl:asian1}
\end{table}

\exclude{
\subsection{Exercises}

\begin{enumerate}
\item Prove the general inversion method
\item Another special case of great practical interest is where $X$ takes 
values in $\Z$. Here, the distribution function is obviously not invertible.
However, usually the pseudo inverse can be reasonably well computed.

\item \label{ue:cdfinv} Let $f:\R\longrightarrow \R$ be a positive probability density function, i.e. 
\begin{itemize}
\item $f$ is measurable;
\item $f(x)>0$ for all $x\in\R$;
\item $\int_\R f(x) dx =1$.
\end{itemize}
Define $f(x):=\int_{-\infty}^xf(\xi)d\xi$. Then $F:\R\longrightarrow (0,1)$ is
bijective.

\end{enumerate}
}



\end{document}